\documentclass[12pt,journal,epsfig,onecolumn,draftclsnofoot]{IEEEtran}
\usepackage{amsmath}
\usepackage{graphicx} 
\usepackage{amssymb}
\usepackage{amsfonts}
\usepackage{amsthm}
\usepackage{cite}
\usepackage{bm,comment}
\usepackage{algorithm}
\usepackage{algpseudocode}

\usepackage{subeqnarray}
\usepackage{subfigure}
\usepackage{multirow}
\usepackage{diagbox}
\usepackage{slashbox}
\usepackage{stfloats}
\usepackage{float}
\usepackage{color} 
\usepackage{cases}

\newtheorem{proposition}{Proposition}
\newtheorem{remark}{Remark}

\newtheorem{corollary}{Corollary}

\floatname{algorithm}{Algorithm}

\begin{document}
\title{Joint Power Allocation and Passive Beamforming Design for IRS-Assisted Physical-Layer Service Integration}
\author{Boyu Ning, \IEEEmembership{Student Member, IEEE}, Zhi Chen, \IEEEmembership{Senior Member, IEEE},\\
Zhongbao Tian, \IEEEmembership{Student Member, IEEE}, and Shaoqian Li, \IEEEmembership{Fellow, IEEE} \vspace{-6pt}
\thanks{This work was supported in part by the National Key R$\&$D Program of China under Grant 2018YFB1801500.}
\thanks{The authors are with National Key Laboratory of Science and Technology on Communications, University of Electronic Science and Technology of China, Chengdu 611731, China (e-mails: boydning@outlook.com; chenzhi@uestc.edu.cn; Vincent11231@outlook.com; lsq@
uestc.edu.cn).}}
\maketitle

\begin{abstract}
Intelligent reflecting surface (IRS) has emerged as an appealing solution to enhance the wireless communication performance by reconfiguring the wireless propagation environment. In this paper, we propose to apply IRS to the physical-layer service integration (PHY-SI) system, where a single-antenna access point (AP) integrates two sorts of service messages, i.e., multicast message and confidential message, via superposition coding to serve multiple single-antenna users. Our goal is to optimize the power allocation (for transmitting different messages) at the AP and the passive beamforming at the IRS to maximize the achievable secrecy rate region. To this end, we formulate this problem as a bi-objective optimization problem, which is shown equivalent to a secrecy rate maximization problem subject to the constraints on the quality of multicast service. Due to the non-convexity of this problem, we propose two customized algorithms to obtain its high-quality suboptimal solutions, thereby approximately characterizing the secrecy rate region. The resulting performance gap with the globally optimal solution is analyzed. Furthermore, we provide theoretical analysis to unveil the impact of IRS beamforming on the performance of PHY-SI. Numerical results demonstrate the advantages of leveraging IRS in improving the performance of PHY-SI and also validate our theoretical analysis.

\end{abstract}
\begin{IEEEkeywords}
Intelligent reflecting surface, physical layer service integration, multicasting, wiretap channel, secrecy rate region.
\end{IEEEkeywords}
\IEEEpeerreviewmaketitle


\section{Introduction}
Recently, intelligent reflecting surface (IRS) has emerged as an appealing candidate for future wireless communications due to its low hardware cost and energy consumption\cite{csa,cometa,basar}. Specifically, IRS is a planar array consisting of massive reflecting elements, each being able to passively reflect the incident wireless signal by adjusting its phase shift (PS). By controlling the PSs at IRS, the reflected signals of different elements can be added/counteracted in intended/unintended directions, thus resulting in a programmable and controllable wireless environment. In view of this advantage, both academia and industry have been exploring the performance optimization in different IRS-assisted systems. For example, in a multi-input single-output (MISO) IRS-assisted system, different optimization targets, e.g., transmit power minimization\cite{qinte}, weighted sum-rate maximization\cite{ghy}, and energy efficiency maximization\cite{huangchi}, etc., have been investigated by jointly optimizing the active beamforming at the access point (AP) and PSs (or passive beamforming) at IRS. In the more general and challenging multi-input multi-output (MIMO) IRS-assisted system, the authors in \cite{nby,szhang,pwang} have proposed some suboptimal algorithms, e.g., sum-path-gain maximization approach\cite{nby}, alternating optimization (AO)-based method\cite{szhang}, and manifold optimization (MO)-based algorithm\cite{pwang}, etc., to maximize the channel capacity. As a further advance, the authors in \cite{cpan} focused on a multi-cell IRS-assisted MIMO network and proposed a block coordinate descent (BCD) algorithm to maximize the weighted sum-rate of all users\cite{cpan}. 

On the other hand, physical-layer security (PLS) is also a critical issue in wireless communications. Exploiting IRS to improve the performance of PLS has been widely studied in the literature recently. For example, in the case with one legitimate user and one eavesdropper, \cite{nby2,mcui,hshen,jchen,xguan}  have proposed different algorithms, e.g., the alternating direction method of multipliers (ADMM) algorithm\cite{nby2}, semidefinite relaxation (SDR)\cite{mcui}, and majorization-minimization (MM) method\cite{hshen}, to jointly optimize the active and passive beamforming for maximizing the minimum-secrecy-rate. Furthermore, the authors in \cite{jchen} considered a more general broadcast system with multiple legitimate users and multiple eavesdroppers and proposed an alternating optimization (with path-following procedures) algorithm to maximize the secrecy rate.  Besides, it was revealed in \cite{xguan} that emitting artificial noise (AN) is an effective means for enhancing the secrecy rate in IRS-assisted systems. It is worth noting that in \cite{nby2,mcui,hshen,jchen,xguan},
it is assumed that each legitimate user only requests the confidential message while other users who do not request this message are treated as eavesdroppers.

Nevertheless, in practice, all of these users may order assorted services at the same time. For example, both confidential and public (e.g., multicast) messages may coexist by applying the emerging physical-layer service integration (PHY-SI) technique, which integrates confidential and multicast messages via superposition coding for one-time transmission, so as to improve the spectral efficiency. In fact, the concept of PHY-SI can be traced back to Csisz{\'a}r and K\"{o}rner's seminal work \cite{cjk} for a discrete memoryless broadcast channel. This work was later extended to the general MIMO systems\cite{hdl} and bidirectional relay networks\cite{rwy}. In PHY-SI, each user first decodes the common multicast message and the legitimate user further decodes the confidential message by subtracting the decoded multicast message. As such, a basic problem in PHY-SI lies in how to reconcile the trade-off between maximizing the quality of the confidential service (or secrecy rate) and that of the multicast service (or the minimum achievable rate among all users when decoding the multicast message, referred to as multicast rate in the sequel of this paper)\cite{meiac}.  This can be formulated as a bi-objective secrecy rate region maximization problem. To solve this problem, the authors in \cite{hdl} proposed a re-parameterization method to find all of its Pareto optimal solutions, i.e., the boundary points of the secrecy rate region. An alternative approach is by reformulating this bi-objective optimization problem as a secrecy rate maximization problem yet subject to an additional constraint on the quality of multicast service (QoMS). By traversing the desired multicast rate included in the QoMS constraint, all Pareto optimal solutions can be derived\cite{mei1}. Moreover, the authors in \cite{mei2} revealed that emitting AN may not always bring performance gain in PHY-SI as it results in co-channel interference to the multicast message, which is fundamentally different from PLS. Along with the advent of IRS, it is an interesting problem to study whether IRS helps enhance the performance of PHY-SI. However, to the best of our knowledge, this problem has not been studied yet. 

To fill in this gap, we consider this secrecy rate region maximization problem in a multi-user single-input single-output (SISO) IRS-assisted system, where all users have ordered the multicast service, while one user further orders the confidential service. Thus, in decoding its confidential message, all other users are treated as eavesdroppers. It is worth noting that as compared to the conventional PHY-SI without IRS, maximizing the secrecy rate region is a much more challenging problem even in the SISO case. This is due to the non-convex unit-modulus constraints on the IRS passive beamforming, which is coupled in the secrecy rate and multicast rate.  Our goal is to optimize the power allocation (for transmitting the multicast and confidential messages) at the AP and the IRS passive beamforming to find all Pareto optimal solutions. To this end, we first consider a two-user system (thus with only a single eavesdropper). Specifically, we show that the bi-objective optimization problem is equivalent to a secrecy rate maximization problem subject to QoMS constraints. However, this equivalent problem is still non-convex. To tackle this challenge, we propose a Charnes-Cooper transformation (CCT)-based algorithm to approximately solve it via the SDR technique. Furthermore, we show that the optimal power allocation ratio can be obtained in closed-form with any given passive beamforming. By leveraging this fact, we also proposed a weighted sum covariance matrix (WSCM)-based algorithm with lower complexity to solve the considered problem. Next, we extend the above algorithms to accommodate the general multi-user/eavesdropper scenario and characterize the performance gap to the globally optimal solution. To draw useful insights, we perform theoretical analysis  to unveil the impact of IRS passive beamforming on the performance of PHY-SI versus that without IRS. Numerical results demonstrate that the performance of our proposed algorithms can yield near-optimal performance. The results also reveal that the secrecy rate region can be significantly enlarged by leveraging IRS compared to the scheme without IRS. In particular, by equipping the IRS with sufficient number of elements, the number of users who can achieve positive secrecy rate can be increased significantly. 

The rest of this paper is organized as follows. Section II introduces the system model and the problem formulation. In Sections III and IV, efficient algorithms are proposed to solve the formulated problems in the two-user and multi-user cases, respectively.
Section V provides the theoretical analysis on the IRS-assisted PHY-SI. Section V presents the numerical results to evaluate the efficacy of our proposed algorithms. Finally, we conclude the paper in Section VI.

\indent \emph{Notation:} We use small normal face for scalars, small bold face for vectors, and capital bold face for matrices. $\rm{Tr}(\cdot)$, $\rm{rank}(\cdot)$, and $|\cdot|$ represent the trace, rank, and modulus of a matrix, respectively. The superscript ${{\rm{\{ }} \cdot \}^*}$ and ${{\rm{\{ }} \cdot \}^H}$ denote the conjugate and Hermitian transpose. ${\bf{a}}(n)$ represents the $n$th element of ${\bf{a}}$. ${\rm{diag}}( \cdot )$ denotes a diagonal matrix whose diagonal elements are given by its argument. $\mathcal{CN}(\mu,\sigma^2)$ means circularly symmetric complex Gaussian (CSCG) distribution with mean of $\mu$ and variance of $\sigma^2$. ${{\bf{I}}}$ denotes the identity matrix. $\mathbb{K}$ represents a proper cone, and $\mathbb{K}^*$ represents a dual cone associated with $\mathbb{K}$.

\begin{figure}[t]
\centering
\includegraphics[width=3in]{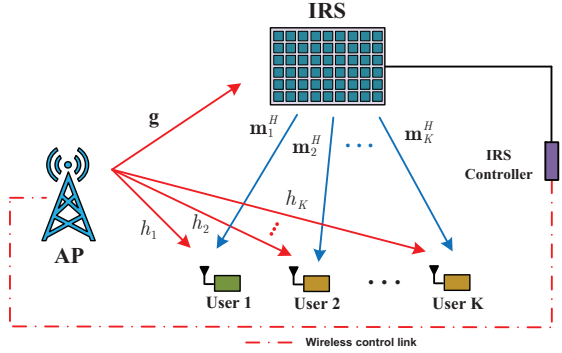}
\caption{An IRS-assisted PHY-SI system.}\label{mo}
\vspace{-6pt}
\end{figure}

\section{System Model and Problem Formulation}
We consider an IRS-assisted communication system as depicted in Fig. \ref{mo}, where a single-antenna AP serves $K$ single-antenna users, with the assistance of an IRS equipped with $N$ passive PSs and installed on a nearby wall. We assume that all the users have requested a multicast service and user 1 further requests a confidential service. As such, the other users serve as potential internal eavesdroppers which are able to eavesdrop the confidential information intended for user 1. For ease of exposition, we define $\mathcal{K}\triangleq \{1,2,...,K\}$ and $\mathcal{E}\triangleq \mathcal{K}/\{1\}$ to represent the index set of all users and that of eavesdroppers, respectively. In PHY-SI, the AP transmits the superposition of the confidential message $s_c \sim \mathcal{CN}(0,1)$ and the multicast message $s_m \sim \mathcal{CN}(0,1)$, denoted as
\begin{equation}
{x = }\sqrt{\alpha} {s_c} + \sqrt{\beta} {s_m},
\end{equation}
where $\alpha $ and $\beta$ represent the transmit power allocated to the confidential message and multicast message, respectively. To characterize the theoretical limit of the considered system, we assume that the channel state information (CSI) on all links involved are available at the AP and can be sent to the IRS controller for tuning the IRS's PSs. In practice, this channel knowledge can be obtained by using some customized channel estimation schemes for IRS-aided wireless communications (see \cite{est1,est2,est3}). In addition, we consider the quasi-static block fading channels, i.e., all channels involved remain constant within the considered fading block. Let $\{h_k\}_{k=1}^K \in {\mathbb{C}}$, $\{{\bf{m}}_k^H\}_{k=1}^K \in {\mathbb{C}^{1 \times N}}$, and ${\bf{g}} \in {\mathbb{C}^{{N} \times 1}}$ denote the baseband equivalent channels of AP-user$k$ link, IRS-user$k$ link, and AP-IRS link, respectively. With the IRS, the transmitted signal via ${\bf{g}} $ can be dynamically altered by its $N$ PSs, denoted as  ${\bf{\Theta }} = {\rm{diag}}( {e^{j{\theta _1}}}, {e^{j{\theta _2}}}, \cdots , {e^{j{\theta _{N}}}})$, and then reflected to the users via channels $\{{\bf{m}}_k^H\}_{k=1}^K$, where $j = \sqrt {-1}$ is an imaginary unit and $\theta _i \in [0,2\pi), i=1,2,\cdots,N$ is PS in the $i$-th reflecting element of the IRS.  The overall received signal at user $k$ from both AP-user link and AP-IRS-user link can be expressed as\cite{hdl}
\begin{equation}
{y_k} = \sqrt \alpha({\bf{m}}_k^H{\bf{\Theta g}} + {h_k}){s_c} + \sqrt \beta ({\bf{m}}_k^H{\bf{\Theta g}} + {h_k}){s_m} + {n_k},\;\;k\in \mathcal{K}, 
\end{equation}
where $n_k \sim \mathcal{CN}(0,\sigma _k^2)$ is the zero-mean additive Gaussian noise with power $\sigma _k^2$. Here, we ignore the signals reflected by the IRS more than once due to the severe path-loss. Denote $R_c$ and $R_m$ as the achievable rates of the confidential message and the multicast message, respectively. Given the total power constraint $\alpha + \beta \leq P$, the secrecy rate region $C_s(\{h_k\}_{k=1}^K,\{{\bf{m}}_k^H\}_{k=1}^K,{\bf{g}},P)$ is defined as a set of non-negative rate pairs $({R_m},{R_c})$ satisfying\cite{hdl}
\begin{subequations}\label{333}
\begin{align}
&{R_m} \le \mathop {\min }\limits_{k \in \mathcal{K}} \log \left( {1 + \frac{{\beta |{\bf{m}}_k^H{\bf{\Theta g}} + {h_k}{|^2}}}{{\sigma _k^2 + \alpha |{\bf{m}}_k^H{\bf{\Theta g}} + {h_k}{|^2}}}} \right),\label{3a}\\ 
&{R_c} \le \log \left( {1 + \frac{{\alpha |{\bf{m}}_1^H{\bf{\Theta g}} + {h_1}{|^2}}}{{\sigma _1^2}}} \right) - \mathop {\max }\limits_{k \in \mathcal{E}}\log \left( {1 + \frac{{\alpha |{\bf{m}}_k^H{\bf{\Theta g}} + {h_k}{|^2}}}{{\sigma _k^2}}} \right).\label{3b}
\end{align}
\end{subequations}
It is observed from (\ref{333}) that all users first decode the multicast message by treating the confidential message as noise. By removing the decoded multicast message, user 1 then decodes the confidential message without interference from the multicast message. 

In this work, we focus on characterizing the secrecy rate region $C_s$ in the presence of an assisted IRS. This is equivalent to solving a bi-objective optimization problem below with cone $\mathbb{K} = {\mathbb{K}^*} = {\mathbb{R}}_+^2$, i.e.,
\begin{subequations}\label{4or}
\begin{align}
({\rm{P}}1):\;\mathop {\max }\limits_{\alpha ,\beta ,{\bf{v}},{R_m},{R_c}} &({\rm{w}}.{\rm{r}}{\rm{.t}}{\rm{. }}\;{\mathbb{R}}_+^2)\;({R_m},{R_c})\\
{\rm{s}}{\rm{.t}}{\rm{.}}\;\;&\mathop {\min }\limits_{k \in \mathcal{K}} \log \left( {1 + \frac{{\beta |{\bf{m}}_k^H{\rm{diag}}({\bf{v}}^*){\bf{g}} + {h_k}{|^2}}}{{\sigma _k^2 + \alpha |{\bf{m}}_k^H{\rm{diag}}({\bf{v}}^*){\bf{g}} + {h_k}{|^2}}}} \right) \ge {R_m},\label{4b}\\
&\mathop {\min }\limits_{k \in \mathcal{E}}\log \frac{{\sigma _1^2\sigma _k^2{\rm{ + }}\sigma _k^2\alpha |{\bf{m}}_1^H{\rm{diag}}({{\bf{v}}^*}){\bf{g}} + {h_1}{|^2}}}{{\sigma _1^2\sigma _k^2{\rm{ + }}\sigma _1^2\alpha |{\bf{m}}_k^H{\rm{diag}}({{\bf{v}}^*}){\bf{g}} + {h_k}{|^2}}} \ge {R_c},\label{4c}\\
&\;\alpha  + \beta  \le P,\;\;\alpha  \ge 0,\;\;\beta  \ge 0,\;\; \left| {{\bf{v}}(i)} \right| = 1, \;\; i=1,2,\cdots,N.
\end{align}
\end{subequations}
where ${\bf{v}} = [{e^{j{\theta _1}}},{e^{j{\theta _2}}}, \cdots ,e^{j{\theta _{N_r}}}]^H$ denotes the PS vector at the IRS, i.e., ${\bf{\Theta }} = {\rm{diag}}({\bf{v}}^*)$. The constraint in (\ref{4c}) is derived by reformulating (\ref{3b}). For $({\rm{P}}1)$, we have the following proposition.
\begin{proposition}
$({\rm{P}}1)$ is feasible if  ${\big| {\sum\limits_{i = 1}^N {|{{\bf{m}}_1}(i)||{\bf{g}}(i)|}  + {h_1}} \big|^2} \ge \mathop {\max }\limits_k \frac{{\sigma _1^2}}{{\sigma _k^2}}{\left| {\sum\limits_{i = 1}^N {|{{\bf{m}}_k}(i)||{\bf{g}}(i)|}  + {h_k}} \right|^2}$, and $({\rm{P}}1)$ is infeasible if there exists a $k$ satisfying ${\big| {\sum\limits_{i = 1}^N {|{{\bf{m}}_1}(i)||{\bf{g}}(i)|}  + {h_1}} \big|^2} \le \frac{{\sigma _1^2|{h_k}{|^2}}}{{\sigma _k^2}}$.
\end{proposition}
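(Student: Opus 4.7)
My first step is to strip the trivialities from $({\rm P}1)$ so that only a condition on ${\bf v}$ remains. Because $(R_m,R_c)=(0,0)$ together with any admissible $(\alpha,\beta,{\bf v})$ satisfies (\ref{4b})--(\ref{4c}), ``feasibility'' must be read in the nontrivial sense that $R_c>0$ is attainable. Setting $R_m=0$ trivializes (\ref{4b}); unpacking (\ref{4c}) with $R_c>0$ and $\alpha>0$ then reduces the question to the existence of a unit-modulus ${\bf v}$ with
\[
\sigma_k^2\,|{\bf m}_1^H{\rm diag}({\bf v}^*){\bf g}+h_1|^2\;\geq\;\sigma_1^2\,|{\bf m}_k^H{\rm diag}({\bf v}^*){\bf g}+h_k|^2,\quad\forall k\in\mathcal{E}.
\]
Both parts of the proposition are then conditions on the channel magnitudes that guarantee this system is or is not solvable.

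For the \emph{sufficient} direction I would use the cophase construction at user~$1$: choose $\theta_i=\arg(h_1)-\arg({\bf m}_1(i)^*{\bf g}(i))$ for $i=1,\ldots,N$. This aligns every reflected path with the direct path and saturates the triangle inequality at user~$1$, yielding $|{\bf m}_1^H{\rm diag}({\bf v}^*){\bf g}+h_1|=\sum_{i=1}^N|{\bf m}_1(i)||{\bf g}(i)|+|h_1|$. At the same ${\bf v}$, the triangle inequality gives the universal upper bound $|{\bf m}_k^H{\rm diag}({\bf v}^*){\bf g}+h_k|\leq\sum_{i=1}^N|{\bf m}_k(i)||{\bf g}(i)|+|h_k|$ for every $k\in\mathcal{E}$, independent of the actual phase alignment. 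Substituting both into the reduced system reproduces the stated sufficient condition verbatim.

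For the \emph{infeasibility} direction I would exploit the universal upper bound $|{\bf m}_1^H{\rm diag}({\bf v}^*){\bf g}+h_1|^2\leq A_{\max}^2$ for every unit-modulus ${\bf v}$, where $A_{\max}$ is the square-root of the left-hand side of the hypothesis. If $A_{\max}^2\leq\sigma_1^2|h_k|^2/\sigma_k^2$ for some $k$, then any candidate ${\bf v}$ that verifies the reduced system would be forced to obey $|{\bf m}_k^H{\rm diag}({\bf v}^*){\bf g}+h_k|^2\leq|h_k|^2$, i.e., the IRS-reflected term at user~$k$ would have to partially cancel $h_k$. The conclusion would then follow by ruling out such a ${\bf v}$, contradicting the existence of a feasibility certificate.

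\textbf{Main obstacle.} The feasibility side is a one-shot triangle-inequality-plus-cophase calculation. The harder step is the infeasibility direction, namely ruling out any ${\bf v}$ that pushes $|{\bf m}_k^H{\rm diag}({\bf v}^*){\bf g}+h_k|$ strictly below $|h_k|$ while simultaneously matching the alignment required for $a({\bf v})$ to approach $A_{\max}^2$. The triangle-inequality bounds used on the feasibility side are one-sided and do not by themselves forbid such cancellation, so closing this direction requires a finer argument coupling the phase choices across users --- most likely a case analysis on the relative phase of ${\bf m}_k^H{\rm diag}({\bf v}^*){\bf g}$ and $h_k$ combined with the tight alignment condition forced by $a({\bf v})\to A_{\max}^2$ --- which is the step I expect to be the most delicate in the full write-up.
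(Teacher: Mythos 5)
Your reduction of feasibility to the existence of a unit-modulus ${\bf v}$ with $\sigma_k^2|{\bf m}_1^H{\rm diag}({\bf v}^*){\bf g}+h_1|^2\ge\sigma_1^2|{\bf m}_k^H{\rm diag}({\bf v}^*){\bf g}+h_k|^2$ for all $k\in\mathcal{E}$ is exactly the paper's condition (\ref{40}), and your sufficiency construction (co-phase the reflected paths at user 1, triangle-bound the eavesdroppers) is the same as the paper's choice ${\bf v}_1$ up to the common phase rotation. One caveat: your eavesdropper bound is $\sum_{i=1}^N|{\bf m}_k(i)||{\bf g}(i)|+|h_k|$, whereas the stated hypothesis controls $\bigl|\sum_{i=1}^N|{\bf m}_k(i)||{\bf g}(i)|+h_k\bigr|$, which is in general smaller; so the chain does not reproduce the stated condition ``verbatim'' but rather proves sufficiency under a slightly stronger hypothesis (the same $h_k$-versus-$|h_k|$ bookkeeping also affects your claim that $A_{\max}^2$ equals the literal left-hand side of the hypothesis). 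The paper bridges this by identifying $\bigl|\sum_{i=1}^N|{\bf m}_k(i)||{\bf g}(i)|+h_k\bigr|^2$ with $\max_{\bf v}|{\bf m}_k^H{\rm diag}({\bf v}^*){\bf g}+h_k|^2$, i.e., it reads these sums as maximal combined gains; under that reading your argument and the paper's coincide, so this is a presentational mismatch rather than a different method.

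The genuine gap is the infeasibility direction, which you leave open and for which your route would not close. Your plan is to show that a feasibility certificate would force $|{\bf m}_k^H{\rm diag}({\bf v}^*){\bf g}+h_k|\le|h_k|$ and then to rule this out by coupling it with near-maximal alignment of user 1's channel; but feasibility only uses the upper bound on user 1's gain and in no way forces that gain to approach $A_{\max}$, so there is no alignment condition to exploit, and in general nothing forbids the IRS from destructively combining at user $k$ while keeping user 1's gain moderate --- a phase case analysis cannot manufacture such a prohibition. The paper's Appendix A argues differently and pointwise in ${\bf v}$: for every ${\bf v}$ it chains $|{\bf m}_1^H{\rm diag}({\bf v}^*){\bf g}+h_1|^2\le\bigl|\sum_{i=1}^N|{\bf m}_1(i)||{\bf g}(i)|+h_1\bigr|^2\le\sigma_1^2|h_k|^2/\sigma_k^2<\sigma_1^2|{\bf m}_k^H{\rm diag}({\bf v}^*){\bf g}+h_k|^2/\sigma_k^2$, so that (\ref{40}) fails for all ${\bf v}$; the strict step is precisely the assertion $|{\bf m}_k^H{\rm diag}({\bf v}^*){\bf g}+h_k|>|h_k|$ for every ${\bf v}$, which the paper states directly without the coupling you were seeking (and without addressing the cancellation concern you correctly raise). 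Since your proposal neither proves this inequality nor supplies a substitute argument, the infeasibility half of the proposition is not established by your write-up.
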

\begin{proof}
See Appendix A.
\end{proof}
Note that $({\rm{P}}1)$  is a bi-objective optimization problem. Any Pareto optimal solution to this problem is a boundary point of $C_s$. However, due to the non-convex constraints with regard to (w.r.t.) the coupled variables and the unit-modulus constraint on $\bf{v}$, $({\rm{P}}1)$  is challenging to be solved optimally. To characterize the secrecy rate region, next, we propose an approach to find all (approximate) Pareto optimal points of $({\rm{P}}1)$ .


\section{Two-User System}
In this section, we first consider a two-user system, i.e., $K=2$.

\subsection{An Equivalent Scalar Maximization Problem of $({\rm{P}}1)$ }
In order to find all the Pareto optimal solutions to $({\rm{P}}1)$, we formulate an equivalent scalar maximization problem of $({\rm{P}}1)$ by fixing one entry of  $({R_m},{R_c})$ and maximizing the other one. Specifically, we fix the variable ${R_m}$ as a constant ${r_m} \in [0,r_m^{\max}]$, where $r_m^{\max}$ is the maximum achievable multicast rate. As such, maximizing the vector $({R_m},{R_c})$ reduces to maximizing $R_c$, which is equivalent to the following problem,
\begin{subequations}
\begin{align}
({\rm{P}}2):\;{\overline R _c}({r_m}) =& \mathop {\max }\limits_{\alpha ,\beta ,{\bf{v}}} \;{\left\lceil {\log \frac{{\sigma _1^2\sigma _2^2{\rm{ + }}\sigma _2^2\alpha |{\bf{m}}_1^H{\rm{diag}}({{\bf{v}}^*}){\bf{g}} + {h_1}{|^2}}}{{\sigma _1^2\sigma _2^2{\rm{ + }}\sigma _1^2\alpha |{\bf{m}}_2^H{\rm{diag}}({{\bf{v}}^*}){\bf{g}} + {h_2}{|^2}}}} \right\rceil ^ + }\\
&{\rm{s}}{\rm{.t}}{\rm{.}}\;\ \log \left( {1 + \frac{{\beta |{\bf{m}}_1^H{{\rm{diag}}({\bf{v}}^*)}{\bf{g}} + {h_1}{|^2}}}{{\sigma _1^2 + \alpha |{\bf{m}}_1^H{{\rm{diag}}({\bf{v}}^*)}{\bf{g}} + {h_1}{|^2}}}} \right) \ge {r_m},\label{5b}\\
&\qquad \log \left( {1 + \frac{{\beta |{\bf{m}}_2^H{{\rm{diag}}({\bf{v}}^*)}{\bf{g}} + {h_2}{|^2}}}{{\sigma _2^2 + \alpha |{\bf{m}}_2^H{{\rm{diag}}({\bf{v}}^*)}{\bf{g}} + {h_2}{|^2}}}} \right) \ge {r_m},\label{5c}\\
&\qquad \alpha  + \beta  \le P,\;\;\alpha  \ge 0,\;\;\beta  \ge 0,\;\;\left| {{\bf{v}}(i)} \right| = 1,\;\;i = 1,2, \cdots ,N.
\end{align}
\end{subequations}
$({\rm{P}}2)$ can be interpreted as finding the maximum achievable secrecy rate under the QoMS constraint. By following the Theorem 1 given in \cite{mei1}, it is easy to prove that $(r_m,{\overline R _c}({r_m}))$ must be a Pareto optimal solution to $({\rm{P}}1)$, i.e., a boundary point of $C_s$. In addition, some interesting properties w.r.t. $({\rm{P}}2)$ are listed as follows.
\begin{itemize}
\item When $r_m=0$, the QoMS constraint is inactive and $({\rm{P}}2)$  becomes a conventional secrecy rate maximization problem with ${\overline R _c}(0)=r_c^{\max}$, where $r_c^{\max}$ is the maximum achievable secrecy rate in our considered system.
\item When $r_m$ increases, the feasible region of $({\rm{P}}2)$  will be shrank. Thus, the optimum ${\overline R _c}({r_m})$ is monotonically non-increasing with $r_m$.
\item When $r_m=r_m^{\max}$, all transmit power must be allocated to transmit the multicast message for achieving the maximum multicast rate, i.e., $\beta=P$ and $\alpha=0$, with ${\overline R _c}(r_m^{\max})=0$.
\end{itemize}
It follows from the above properties that with increasing $r_m$ from $0$ to $r_m^{\max}$, the optimal value ${\overline R _c}({r_m})$ of (P2) will decrease from $r_c^{\max}$ to $0$. Thus, the two endpoints of the region $C_s$ are given by $(0,r_c^{\max})$ and $(r_m^{\max},0)$, respectively. As a result, by varying the parameter ${r_m}$ within $[0,r_m^{\max}]$, all the Pareto optimal solutions to (P1) can be obtained by solving (P2).


\subsection{An Upper Bound on the Maximum Multicast Rate}
However, it is generally difficult to obtain $r_m^{\max}$ since maximizing the multicast rate is also a challenging optimization problem, i.e., 
\begin{equation}
\begin{split}
({\rm{P}}3):r_m^{\max } = \mathop {\max }\limits_{\beta ,{\bf{v}}} \mathop {\min }\limits_{k = 1,2} \log \left( {1 + \frac{\beta }{{\sigma _k^2}}|{\bf{m}}_k^H{\rm{diag}}({{\bf{v}}^*}){\bf{g}} + {h_k}{|^2}} \right)\\
{\rm{s}}{\rm{.t}}{\rm{.}}\;\;\beta  \le P,\;\beta  \ge 0,\;\;\left| {{\bf{v}}(i)} \right| = 1,\;\;i = 1,2, \cdots ,N.
\end{split}
\end{equation}
Thus, we aim to find an upper bound on $r_m^{\rm{max}}$, denoted as $r_m^{\rm{up}}$, by solving $({\rm{P}}3)$ via the SDR technique. By varying the parameter ${r_m}$ within $[0,r_m^{\rm{up}}]$, we can still find all the Pareto optimal points by solving $({\rm{P}}2)$. In particular, when ${r_m} \in [r_m^{\max},r_m^{\rm{up}}]$, there is no solution to $({\rm{P}}2)$. To reformulate $({\rm{P}}3)$  into an SDR problem, we first rewrite its objective function as an equivalent form. Specifically, by using the equality ${\bf{m}}_k^H{\rm{diag}}({{\bf{v}}^*}) = {{\bf{v}}^H}{\rm{diag}}({\bf{m}}_k^*)$, we have
\begin{equation}\label{fom}
\begin{split}
&\log \left( {1 + \frac{\beta }{{\sigma _k^2}}|{\bf{m}}_k^H{\rm{diag}}({{\bf{v}}^*}){\bf{g}} + {h_k}{|^2}} \right)\\
&\;\; = \log \Big[ {1 + \frac{\beta }{{\sigma _k^2}}\left( {{{\bf{v}}^H}{\rm{diag}}({\bf{m}}_k^*){\bf{g}}{{\bf{g}}^H}{\rm{diag}}({\bf{m}}_k^{}){\bf{v}}} \right.} \\
&\;{\left. {\;\;\;\;\;\;\;\;\;\qquad + {{\bf{v}}^H}{\rm{diag}}({\bf{m}}_k^*){\bf{g}}h_k^* + {h_k}{{\bf{g}}^H}{\rm{diag}}({\bf{m}}_k^{}){\bf{v}} + h_k^*{h_k}} \right)} \Big]\\
&\;\; = \log \left( {1 + \frac{\beta }{{\sigma _k^2}}{{\bf{z}}^H}{{\bf{T}}_k}{\bf{z}}} \right)
\end{split}
\end{equation}
where ${\bf{z}} = [t \cdot {{\bf{v}}^H},\;t]^H$ and $t$ is an introduced auxiliary variable satisfying $|t|=1$. The matrix ${{{\bf{T}}_k}}$ is given by 
\begin{equation}
{{\bf{T}}_k} = \left[ {\begin{array}{*{20}{c}}
{{\rm{diag}}({\bf{m}}_k^*){\bf{g}}{{\bf{g}}^H}{\rm{diag}}({\bf{m}}_k^{})}&{{\rm{diag}}({\bf{m}}_k^*){\bf{g}}h_k^*}\\
{{h_k}{{\bf{g}}^H}{\rm{diag}}({\bf{m}}_k^{})}&{h_k^*{h_k}}
\end{array}} \right],\;\;k = 1,2.
\end{equation}
By noting that ${{\bf{z}}^H}{{\bf{T}}_k}{\bf{z}}{\rm{ = Tr(}}{\bf{z}}{{\bf{z}}^H}{{\bf{T}}_k}{\rm{)}}$, we define ${\bf{Z}} = {\bf{z}}{{\bf{z}}^H}$ which satisfies ${\bf{Z}} \succeq 0$ and ${\rm{rank}}({\bf{Z}}) = 1$. It is obvious that the maximum multicast rate is achieved when setting $\beta=P$. By relaxing the rank one constraint and introducing a slack variable $\tau$, $({\rm{P}}3)$ becomes a convex optimization problem, i.e.,
\begin{equation}
\begin{split}
{\rm{(\hat P3):}}\;\;&r_m^{\rm{up}}=\mathop {\max }\limits_{{\bf{Z}},\tau } \;\tau \\
&{\rm{s}}{\rm{.t}}{\rm{.}}\;\;{\rm{Tr}}(\frac{P}{{\sigma _1^2}}{\bf{Z}}{{\bf{T}}_1}) \ge {2^\tau } - 1, \;\;{\rm{Tr}}(\frac{P }{{\sigma _2^2}}{\bf{Z}}{{\bf{T}}_2}) \ge {2^\tau } - 1, \\
&\qquad\;{\rm{Tr}}({\bf{ZE}}_i)=1,\;\;i = 1,2,...,N+1,\;\;{\bf{Z}} \succeq {\bf{0}}.
\end{split}
\end{equation}
The relaxed problem $(\hat{\rm{P}}3)$ is a convex semidefinite program (SDP), thus can be optimally solved by the interior-point method or CVX\cite{cvx}. Generally, due to the relaxed feasible region, the solution to $(\hat{\rm{P}}3)$ may not be rank one and thus we have $r_m^{\rm{up}}\geq r_m^{\max}$.

\subsection{Proposed Solution to $({\rm{P}}2)$}
Given $r_m \in [0,r_m^{\rm{up}}]$, we aim to find all Pareto optimal solutions to $({\rm{P}}1)$ by solving $({\rm{P}}2)$. However, $({\rm{P}}2)$  is a nonconvex optimization problem. In the sequel, we propose a CCT-based algorithm to tackle it for characterizing the secrecy rate region. 
\begin{remark}
When ${\overline R _c}(0)=0$, the secrecy rate region $C_s$ reduces to a line segment on the axis of the multicast rate, i.e., ${C_s} = \left\{ {(0,{R_m})|{R_m} \in [0,r_m^{\max}]} \right\}$. In this case, our considered problem is degraded into the multicast rate maximization problem, i.e., $({\rm{P}}3)$.
\end{remark}
Remark 1 implies that $({\rm{P}}2)$  is trivial when ${\overline R _c}(0)=0$, thus we assume ${\overline R _c}(0)>0$  in this paper, which is equivalent to the following condition,
\begin{equation}\label{cond}
\frac{{|{\bf{m}}_1^H{\rm{diag}}({{\bf{v}}^*}){\bf{g}} + {h_1}{|^2}}}{{\sigma _1^2}} > \frac{{|{\bf{m}}_2^H{\rm{diag}}({{\bf{v}}^*}){\bf{g}} + {h_2}{|^2}}}{{\sigma _2^2}}.
\end{equation}
Define a function $f(x) = \log [1 + \beta x/(1 + \alpha x)]$ and it is easy to verify that $f(x)$ is monotonically increasing with $x$. Note that the left hand sides (LHSs) of (\ref{5b}) and (\ref{5c}) can be expressed as $f(|{\bf{m}}_1^H{\rm{diag}}({{\bf{v}}^*}){\bf{g}} + {h_1}{|^2}/\sigma _1^2)$ and $f(|{\bf{m}}_2^H{\rm{diag}}({{\bf{v}}^*}){\bf{g}} + {h_2}{|^2}/\sigma _2^2)$, respectively. Thus, (\ref{cond}) implies that the former must be greater than the latter. Hence, constraint (\ref{5b}) is redundant and can be discarded in the optimization. Next, we further simplify $({\rm{P}}2)$ based on the facts below.
\begin{enumerate}
\item We ignore the logarithmic function due to the monotonicity property.
 \item The total transmit power constraint is tight when the maximum secrecy rate is achieved, i.e., $\alpha + \beta  = P$.
 \item Similar to the reformulation in (\ref{fom}), we have the following equivalent relations,
\begin{align}
&\frac{{\sigma _1^2\sigma _2^2{\rm{ + }}\sigma _2^2\alpha |{\bf{m}}_1^H{\rm{diag}}({{\bf{v}}^*}){\bf{g}} + {h_1}{|^2}}}{{\sigma _1^2\sigma _2^2{\rm{ + }}\sigma _1^2\alpha |{\bf{m}}_2^H{\rm{diag}}({{\bf{v}}^*}){\bf{g}} + {h_2}{|^2}}} = \frac{{\sigma _1^2\sigma _2^2{\rm{ + }}\sigma _2^2\alpha {\rm{Tr}}({\bf{Z}}{{\bf{T}}_1})}}{{\sigma _1^2\sigma _2^2{\rm{ + }}\sigma _1^2\alpha {\rm{Tr}}({\bf{Z}}{{\bf{T}}_2})}} = \frac{{{\rm{Tr[}}{\bf{Z}}(\frac{{\sigma _1^2\sigma _2^2}}{N+1}{\bf{I}} + \sigma _2^2\alpha {{\bf{T}}_1})]}}{{{\rm{Tr[}}{\bf{Z}}(\frac{{\sigma _1^2\sigma _2^2}}{N+1}{\bf{I}} + \sigma _1^2\alpha {{\bf{T}}_2})]}},\\
&\frac{{\beta |{\bf{m}}_2^H{\bf{\Theta g}} + {h_2}{|^2}}}{{\sigma _2^2 + \alpha |{\bf{m}}_2^H{\bf{\Theta g}} + {h_2}{|^2}}} = \frac{{(P - \alpha ){\rm{Tr}}({\bf{Z}}{{\bf{T}}_2})}}{{\sigma _2^2 + \alpha {\rm{Tr}}({\bf{Z}}{{\bf{T}}_2})}} = \frac{{{\rm{Tr[ }}(P-\alpha ){\bf{Z}}{{\bf{T}}_2}]}}{{{\rm{Tr[}}{\bf{Z}}(\frac{{\sigma _2^2}}{N+1}{\bf{I}} + \alpha {{\bf{T}}_2})]}}.
\end{align}
\end{enumerate}
Based on the above facts, solving $({\rm{P}}2)$ is equivalent to solving the following problem $({\rm{P}}4)$,
\begin{equation}
\begin{split}
({\rm{P}}4):\;{2^{{{\overline R }_c}({r_m})}} =& \mathop {\max }\limits_{\alpha ,{\bf{Z}}} \frac{{{\rm{Tr[}}{\bf{Z}}(\frac{{\sigma _1^2\sigma _2^2}}{N+1}{\bf{I}} + \sigma _2^2\alpha {{\bf{T}}_1})]}}{{{\rm{Tr[}}{\bf{Z}}(\frac{{\sigma _1^2\sigma _2^2}}{N+1}{\bf{I}} + \sigma _1^2\alpha {{\bf{T}}_2})]}}\\
&{\rm{s}}{\rm{.t}}{\rm{.}}\;\frac{{{\rm{Tr[}}(P -\alpha ){\bf{Z}}{{\bf{T}}_2}]}}{{{\rm{Tr[}}{\bf{Z}}(\frac{{\sigma _2^2}}{N+1}{\bf{I}} + \alpha {{\bf{T}}_2})]}} \ge {2^{{r_m}}} - 1,\\
&\qquad{\rm{Tr(}}{\bf{Z}}{{\bf{E}}_i}) = 1,\;\;i = 1,2,...,N+1,\;\;\;\;\quad\\
&\qquad{\rm{rank}}({\bf{Z}})=1,\;{\bf{Z}} \succeq {\bf{0}},\;\alpha  \ge 0,\;P - \alpha  \ge 0.
\end{split}
\end{equation}
Problem $({\rm{P}}4)$  is a fractional programming problem. To solve it, we propose an efficient approach to transform this problem into a more tractable from by applying the CCT, i.e.,
\begin{equation}
{\bf{Z}} = {\bf{Y}}/\xi ,\;\;\;\;\xi  > 0.
\end{equation}
Meanwhile, we relax the rank one constraint and $({\rm{P}}4)$  is reduced to
\begin{equation}
\begin{split}
{\rm{(\hat P4)}}\;\;& \mathop {\max }\limits_{\alpha ,{\bf{Y}},\xi } {\rm{Tr[}}{\bf{Y}}(\frac{{\sigma _1^2\sigma _2^2}}{N+1}{\bf{I}} + \sigma _2^2\alpha {{\bf{T}}_1})]\\
&{\rm{s}}{\rm{.t}}{\rm{.}}\;{\rm{Tr[}}{\bf{Y}}(\frac{{\sigma _1^2\sigma _2^2}}{N+1}{\bf{I}} + \sigma _1^2\alpha {{\bf{T}}_2})] = 1,\\
&\qquad{\rm{Tr}}\left\{ {{\bf{Y}}\left[ {{\rm{(P - }}\alpha {2^{{r_m}}}){{\bf{T}}_2} - \frac{{({2^{{r_m}}} - 1)\sigma _2^2}}{N+1}{\bf{I}}} \right]} \right\} \ge 0,\\
&\qquad{\rm{Tr(}}{\bf{Y}}{{\bf{E}}_i}) = \xi ,\;\;i = 1,2,...,N+1,\\
&\qquad\;{\bf{Y}} \succeq {\bf{0}},\;\alpha  \ge 0,\;P - \alpha  \ge 0,\;\xi  > 0.
\end{split}
\end{equation}
Note that ${\rm{(\hat P4)}}$ is a coordinate-wise convex optimization problem w.r.t $\alpha$ and $\{{\bf{Y}},\xi \}$. A straightforward approach to solve it is by using the alternating optimization (AO), i.e.,  optimizing $\alpha$ with fixed $\{{\bf{Y}},\xi \}$ and optimizing $\{{\bf{Y}},\xi \}$ with fixed $\alpha$ in an alternate manner. However, as the objective function of ${\rm{(\hat P4)}}$ is not jointly convex w.r.t $\{{\alpha,\bf{Y}},\xi \}$, AO generally can only yield a local optimal solution. To improve over AO, by noting $\alpha \in [0,P]$, the globally optimal solution to ${\rm{(\hat P4)}}$ can be obtained via a one dimensional search over $\alpha$ whilst optimizing $\{{\bf{Y}},\xi \}$ in every search. In particular,  ${\rm{(\hat P4)}}$ is an SDP with fixed $\alpha$ and the optimal $\alpha$ should be chosen as the one that leads to the maximum objective value of ${\rm{(\hat P4)}}$.

Generally, the globally optimal solution $\{{\alpha,\bf{Y}},\xi \}$ to ${\rm{(\hat P4)}}$ may not lead to a feasible PS vector ${\bf{v}}$ to ${\rm{(P2)}}$ unless the rank one constraint is satisfied, i.e., ${\rm{rank(}}{\bf{Y}}/\xi ) = 1$. If not, we can perform the well-known Gaussian randomization procedure (GRP) to derive a feasible PS vector. Specifically, we carry out the eigenvalue decomposition of ${\bf{Z}}={\bf{Y}}/\xi $, denoted as ${\bf{Z}}= {\bf{U\Sigma }}{{\bf{U}}^H}$, and construct a random vector $\widetilde {\bf{z}} = {\bf{U}}{{\bf{\Sigma }}^{1/2}}{\bf{r}}$, where ${\bf{r}}\in {\mathbb{C}}^{N+1}$ is a random vector following $\mathcal{CN}({\bf{0}},{\bf{I}})$. Then we can obtain a feasible PS vector as
\begin{equation}\label{grp}
{\bf{v}} = \left[ {\frac{{\widetilde {\bf{z}}(1)/\widetilde {\bf{z}}(N + 1)}}{{\left| {\widetilde {\bf{z}}(1)/\widetilde {\bf{z}}(N + 1)} \right|}},\frac{{\widetilde {\bf{z}}(2)/\widetilde {\bf{z}}(N + 1)}}{{\left| {\widetilde {\bf{z}}(2)/\widetilde {\bf{z}}(N + 1)} \right|}},...,\frac{{\widetilde {\bf{z}}(N)/\widetilde {\bf{z}}(N + 1)}}{{\left| {\widetilde {\bf{z}}(N)/\widetilde {\bf{z}}(N + 1)} \right|}}} \right].
\end{equation}
By this means, ${\bf{v}}$ is determined as the one which attains the maximum objective value of ${\rm{(P2)}}$ among a set of randomly generated PS vectors via GRP.

\subsection{Lower-Complexity Algorithm}
In the last subsection, we have proposed a CCT-based algorithm to characterize the secrecy rate region in the IRS-assisted PHY-SI. However, in this algorithm, we need to solve an SDP with the path-following approach at every point in the linear search, thus incurring high computational complexity. In this subsection, we propose a WSCM-based algorithm as an suboptimal alternative but with lower computational complexity. Specifically, we first derive an optimal covariance matrix ${{\bf{Z}}_{m\;}}$ by solving $(\hat{\rm{P}}3)$ for maximizing the multicast rate. Then we derive an optimal covariance matrix ${{\bf{Z}}_{c\;}}$  for maximizing the secrecy rate, which can be obtained by solving the following SDP and setting ${{\bf{Z}}_c} = {{\bf{Y}}_c}/{\xi _c}$, 
\begin{equation}
\begin{split}
{\rm{(P5)}}\;\;\{ {{\bf{Y}}_c},{\xi _c}\}=& \arg \mathop {\max }\limits_{{\bf{Y}},\xi } {\rm{Tr[}}{\bf{Y}}(\frac{{\sigma _1^2\sigma _2^2}}{N+1}{\bf{I}} + \sigma _2^2P {{\bf{T}}_1})]\\
&{\rm{s}}{\rm{.t}}{\rm{.}}\;{\rm{Tr[}}{\bf{Y}}(\frac{{\sigma _1^2\sigma _2^2}}{N+1}{\bf{I}} + \sigma _1^2P {{\bf{T}}_2})] = 1,\\
&\qquad{\rm{Tr(}}{\bf{Y}}{{\bf{E}}_i}) = \xi ,\;\;i = 1,2,...,N+1,\qquad\\
&\qquad\;{\bf{Y}} \succeq {\bf{0}},\;\xi  > 0.
\end{split}
\end{equation}
The main idea is still by using the CCT. The details of deriving $({\rm{P}}5)$ are thus omiited for brevity. By introducing a weight $\lambda \in [0,1]$, we can construct a new covariance matrix, i.e.,  
\begin{equation}\label{19}
{{\bf{Z}}}(\lambda) = {\lambda}{{\bf{Z}}_c} + (1 - {\lambda}){{\bf{Z}}_m}.
\end{equation}
Next, we perform the GRP given in (\ref{grp}) to extract a PS vector ${{\bf{v}}}(\lambda)$ from ${{\bf{Z}}}(\lambda)$. By varying the weight $\lambda$ within $[0,1]$, we can obtain a set of ${\bf{v}}(\lambda)$ and then find the optimal one which leads to the maximum secrecy rate. Specifically, for each $\lambda$, we need to obtain its corresponding secrecy rate $R(\lambda)$ for comparison by optimizing the power allocation factor, i.e.,
\begin{subequations}
\begin{align}
{\rm{(P6)}}\;\;R(\lambda) =  &\mathop {\max }\limits_{\alpha } {\left\lceil {\log \frac{{\sigma _1^2\sigma _2^2{\rm{ + }}\sigma _2^2\alpha |{\bf{m}}_1^H{\rm{diag[}}{\bf{v}}{{(\lambda )}^*}]{\bf{g}} + {h_1}{|^2}}}{{\sigma _1^2\sigma _2^2{\rm{ + }}\sigma _1^2\alpha |{\bf{m}}_2^H{\rm{diag[}}{\bf{v}}{{(\lambda )}^*}]{\bf{g}} + {h_2}{|^2}}}} \right\rceil ^ + }\label{6a}\\
&{\rm{s}}{\rm{.t}}{\rm{.,}}\;\log \left( {1 + \frac{{(P-\alpha )|{\bf{m}}_2^H{\rm{diag[}}{\bf{v}}{{(\lambda )}^*}]{\bf{g}} + {h_2}{|^2}}}{{\sigma _2^2 + \alpha |{\bf{m}}_2^H{\rm{diag[}}{\bf{v}}{{(\lambda )}^*}]{\bf{g}} + {h_2}{|^2}}}} \right) \ge {r_m},\qquad\label{6b}\\
&\qquad \alpha  \ge 0,\;P - \alpha  \ge 0,\;\lambda  \ge 0,\;1 - \lambda  \ge 0.
\end{align}
\end{subequations}
where the QoMS constraint for user 1 is discarded due to (\ref{cond}). It is easy to find that the optimal solution to ${\rm{(P6)}}$, denoted as ${\alpha _{{\rm{opt}}}}$, can be derived in closed form. To be specific, since the objective function in (\ref{6a}) monotonically increases with $\alpha$ given the condition (\ref{cond}), ${\alpha _{{\rm{opt}}}}$ should be the largest one within $[0,P]$ that meets constraint (\ref{6b}). In fact, (\ref{6b}) can be rewritten as
\begin{equation}\label{20}
(P - \alpha {2^{{r_m}}})|{\bf{m}}_2^H{\rm{diag}}[{\bf{v}}{(\lambda )^*}]{\bf{g}} + {h_2}{|^2} - ({2^{{r_m}}} - 1)\sigma _2^2 \ge 0.
\end{equation}
It is easy to observe that the left hand side of (\ref{20}) monotonically decreases with $\alpha$. As a result, the optimal allocated power ${\alpha _{{\rm{opt}}}}$ for transmitting the confidential message is given by
\begin{equation}\label{aop}
{\alpha _{{\rm{opt}}}} = {\left\lceil {\min \{ P,\frac{{P|{\bf{m}}_2^H{\rm{diag}}[{\bf{v}}{{(\lambda )}^*}]{\bf{g}} + {h_2}{|^2} - ({2^{{r_m}}} - 1)\sigma _2^2}}{{{2^{{r_m}}}|{\bf{m}}_2^H{\rm{diag}}[{\bf{v}}{{(\lambda )}^*}]{\bf{g}} + {h_2}{|^2}}}\} } \right\rceil ^ + }.
\end{equation}
It is noted from (\ref{aop}) that the optimal transmit power allocated to the confidential message depends critically on the IRS passive beamforming ${\bf{v}}$. More detailed analysis on its impact will be presented in Section V. By substituting (\ref{aop}) into ${\rm{(P6)}}$, we can find the optimal weight $\lambda _{\rm{opt}}$ via the one dimensional search over $\lambda$ within $[0,1]$ and the corresponding PS design, i.e., ${\bf{v}}(\lambda _{\rm{opt}})$, as well as the corresponding power allocation factors ${\alpha _{{\rm{opt}}}}$ and $\beta _{\rm{opt}}=P-{\alpha _{{\rm{opt}}}}$. 
\section{General Multi-User System}
In previous sections, we only discuss the proposed solution to (P1) in the case of $K = 2$. In this section, we consider the general but more challenging multi-user system with $K > 2$ and also analyze the performance gap to the globally optimal solutions. 
\subsection{CCT-based Algorithm for the Multi-User Case}
Similarly as before, we fix the variable ${R_m}$ as a constant ${r_m}$ and the secrecy rate region maximization problem reduces to the maximization of $R_c$, i.e.,
\begin{subequations}
\begin{align}
({\rm{P}}7):\;{{\bar R}_c}({r_m}) &= \mathop {\max }\limits_{\alpha ,\beta ,{\bf{v}}} \mathop {\min }\limits_{k \in \mathcal{E}} \;{\left\lceil {\log \frac{{\sigma _1^2\sigma _k^2{\rm{ + }}\sigma _k^2\alpha |{\bf{m}}_1^H{\rm{diag}}({{\bf{v}}^*}){\bf{g}} + {h_1}{|^2}}}{{\sigma _1^2\sigma _k^2{\rm{ + }}\sigma _1^2\alpha |{\bf{m}}_k^H{\rm{diag}}({{\bf{v}}^*}){\bf{g}} + {h_k}{|^2}}}} \right\rceil ^ + }\\
&{\rm{s}}.{\rm{t}}.\;\;\log \left( {1 + \frac{{\beta |{\bf{m}}_k^H{\rm{diag}}({{\bf{v}}^*}){\bf{g}} + {h_k}{|^2}}}{{\sigma _k^2 + \alpha |{\bf{m}}_k^H{\rm{diag}}({{\bf{v}}^*}){\bf{g}} + {h_k}{|^2}}}} \right) \ge {r_m},\;\forall k\in \mathcal{K},\label{qoss}\\
&\qquad \alpha  + \beta  \le P,\;\;\alpha  \ge 0,\;\;\beta  \ge 0,\;\;\left| {{\bf{v}}(i)} \right| = 1,\;\;i = 1,2, \cdots ,N.
\end{align}
\end{subequations}
By varying the parameter ${r_m}$ within $[0,{r_m^{\rm{up}}}]$, we can find all the boundary points in $({\rm{P}}7)$ and the upper-bound desired multicast rate ${r_{m}^{\rm{up}}}$ in the QoMS constraint can be obtained by solving the following extended SDR from ${\rm{(\hat P3)}}$, i.e., 
\begin{equation}
\begin{split}
{\rm{(P}8):}\;\;&r_m^{\rm{up}}=\mathop {\max }\limits_{{\bf{Z}},\tau } \;\tau \\
&{\rm{s}}{\rm{.t}}{\rm{.}}\;\;{\rm{Tr}}(\frac{P}{{\sigma _k^2}}{\bf{Z}}{{\bf{T}}_k}) \ge {2^\tau } - 1 ,\;\;\forall k\in \mathcal{K},\\
&\qquad\;{\rm{Tr}}({\bf{ZE}}_i)=1,\;\;i = 1,...,N+1,\;\;{\bf{Z}} \succeq {\bf{0}}.
\end{split}
\end{equation}
In the general multi-user case, the necessary condition for ${{\bar R}_c}(0)> 0$ can be expressed as 
\begin{equation}
\frac{{|{\bf{m}}_1^H{\rm{diag}}({{\bf{v}}^*}){\bf{g}} + {h_1}{|^2}}}{{\sigma _1^2}} > \mathop {\max }\limits_k \frac{{|{\bf{m}}_k^H{\rm{diag}}({{\bf{v}}^*}){\bf{g}} + {h_k}{|^2}}}{{\sigma _k^2}}.
\end{equation}
In this case, the QoMS contraint for user 1 in ({\ref{qoss}}) can be discarded and solving $({\rm{P}}7)$ is equivalent to solving $({\rm{P}}9)$,
\begin{equation}
\begin{split}
({\rm{P}}9):\;{2^{{{\overline R }_c}({r_m})}} =& \mathop {\max }\limits_{\alpha ,{\bf{Z}}}\mathop {\min }\limits_{k\in \mathcal{E}} \frac{{{\rm{Tr[}}{\bf{Z}}(\frac{{\sigma _1^2\sigma _k^2}}{N+1}{\bf{I}} + \sigma _k^2\alpha {{\bf{T}}_1})]}}{{ {\rm{Tr[}}{\bf{Z}}(\frac{{\sigma _1^2\sigma _k^2}}{N+1}{\bf{I}} + \sigma _1^2\alpha {{\bf{T}}_k})]}}\\
&{\rm{s}}{\rm{.t}}{\rm{.}}\;\frac{{{\rm{Tr[}}(P -\alpha ){\bf{Z}}{{\bf{T}}_k}]}}{{{\rm{Tr[}}{\bf{Z}}(\frac{{\sigma _k^2}}{N+1}{\bf{I}} + \alpha {{\bf{T}}_k})]}} \ge {2^{{r_m}}} - 1,\;\;k\in \mathcal{E},\\
&\qquad{\rm{Tr(}}{\bf{Z}}{{\bf{E}}_i}) = 1,\;i = 1,2,...,N+1,\;\quad\\
&\qquad{\rm{rank}}({\bf{Z}})=1,\;{\bf{Z}} \succeq {\bf{0}},\;\alpha  \ge 0,\;P - \alpha  \ge 0.
\end{split}
\end{equation}
Due to the minimum operator in the objective function of $({\rm{P}}9)$, we need to further rewrite it as a more tractable form before applying the CCT, i.e.,
\begin{equation}\label{obj}
\mathop {\min }\limits_{k \in {\cal E}} \frac{{{\rm{Tr}}[{\bf{Z}}(\frac{{\sigma _1^2\sigma _k^2}}{{N + 1}}{\bf{I}} + \sigma _k^2\alpha {{\bf{T}}_1})]}}{{{\rm{Tr}}[{\bf{Z}}(\frac{{\sigma _1^2\sigma _k^2}}{{N + 1}}{\bf{I}} + \sigma _1^2\alpha {{\bf{T}}_k})]}} = \mathop {\min }\limits_{k \in {\cal E}} \frac{{{\rm{Tr}}[\frac{{\bf{Y}}}{\xi }(\frac{{\sigma _1^2\sigma _k^2}}{{N + 1}}{\bf{I}} + \sigma _k^2\alpha {{\bf{T}}_1})]}}{{{\rm{Tr}}[\frac{{\bf{Y}}}{\xi }(\frac{{\sigma _1^2\sigma _k^2}}{{N + 1}}{\bf{I}} + \sigma _1^2\alpha {{\bf{T}}_k})]}} = \frac{{{\rm{Tr}}[{\bf{Y}}(\frac{{\sigma _1^2}}{{N + 1}}{\bf{I}} + \alpha {{\bf{T}}_1})]}}{{\mathop {\max }\limits_{k \in {\cal E}} {\rm{Tr}}[{\bf{Y}}(\frac{{\sigma _1^2}}{{N + 1}}{\bf{I}} + \frac{{\sigma _1^2}}{{\sigma _k^2}}\alpha {{\bf{T}}_k})]}}.
\end{equation}
Notice that by properly selecting an auxiliary variable $\xi$, the denominator of the last equation in (\ref{obj}) can be any positive real number. Thus, we can introduce this auxiliary variable $\xi$ along with an additional constraint, i.e., ${\max _{k \in \mathcal{E}}}{\rm{Tr}}[{\bf{Y}}(\frac{{\sigma _1^2}}{{N + 1}}{\bf{I}} + \frac{{\sigma _1^2}}{{\sigma _k^2}}\alpha {{\bf{T}}_k})] = 1$, without loss of optimality. Then, $({\rm{P}}9)$ can be rewritten as an SDR, which is given by
\begin{subequations}
\begin{align}
({\rm{P10}}):\;\;C({r_m})=&\mathop {\max }\limits_{\alpha ,{\bf{Y}},\xi } {\rm{Tr}}[{\bf{Y}}(\frac{{\sigma _1^2}}{{N + 1}}{\bf{I}} + \alpha {{\bf{T}}_1})]\\
&{\rm{s}}.{\rm{t}}.\;{\rm{Tr}}[{\bf{Y}}(\frac{{\sigma _1^2}}{{N + 1}}{\bf{I}} + \frac{{\sigma _1^2}}{{\sigma _k^2}}\alpha {{\bf{T}}_k})] \le 1,\;\forall k \in {\cal E},\label{28b}\\
&\;\;\;\;\;{\rm{Tr}}\left\{ {{\bf{Y}}\left[ {({\rm{P - }}\alpha {2^{{r_m}}}){{\bf{T}}_k} - \frac{{({2^{{r_m}}} - 1)\sigma _k^2}}{{N + 1}}{\bf{I}}} \right]} \right\} \ge 0,\forall k \in {\cal E},\qquad\label{28c} \\
&\;\;\;\;\;\;{\rm{Tr}}({\bf{Y}}{{\bf{E}}_i}) = \xi,\;i = 1,2,...,N + 1,\label{28d}\\
&\;\;\;\quad{\bf{Y}}\succeq {\bf{0}},\;\alpha  \ge 0,\;P - \alpha  \ge 0,\;\xi  > 0.\label{28e}
\end{align}
\end{subequations}
It is easy to observe that ${\rm{(P10)}}$ is an SDP with any fixed $\alpha$; thus, its optimal solution can be derived by performing the following steps. Specifically, let $(\tilde \alpha ,{\bf{\tilde Y}},\tilde \xi )$ be an optimal solution to $({\rm{P10}})$. We aim to find $\tilde \alpha$ by performing a uniform search over $\alpha$ in the following convex optimization problem,
\begin{equation}
\begin{split}
(\hat{\rm{P}}10):\;\;C({r_m},\alpha)=&\mathop {\max }\limits_{{\bf{Y}},\xi } {\rm{Tr}}[{\bf{Y}}(\frac{{\sigma _1^2}}{{N + 1}}{\bf{I}} + \alpha {{\bf{T}}_1})]\\
&{\rm{s}}.{\rm{t}}.\; \rm{(\ref{28b}),\;(\ref{28c}),\;(\ref{28d}),\;{\bf{Y}}\succeq {\bf{0}},\;\xi  > 0.}
\end{split}
\end{equation} 
Obviously, when $\alpha=\tilde\alpha$, we can obtain (${\bf{\tilde Y}},\tilde \xi$). While $(\hat{\rm{P}}10)$ may not be tight, the GRP can be used to obtain a feasible ${\bf{v}}$ from a higher-rank solution. The detailed steps of the generalized CCT-based approach are summarized in Algorithm 1, where $T_{\alpha}$ is the number of uniformly sampling points of $\alpha$. 
\begin{algorithm}
  \caption{Generalized CCT-based Approach for Characterizing the secrecy rate region}
  \begin{algorithmic}[1]
  \Require  $P,\;\{\sigma _k^2\}_{k=1}^K,\;{\bf{g}},\; \{h_k\}_{k=1}^{K},\;\{{\bf{m}}_k^H\}_{k=1}^K,\;T_{\alpha}.$ 
  \State {\bf{Initialization:}} $\mathcal{K}\triangleq \{1,2,...,K\}$, $\mathcal{E}\triangleq \mathcal{K}/\{1\}$, ${{\overline R }_c}({r_m})=0$.  \\
   Compute ${{\bf{T}}_k} = \left[ {\begin{array}{*{20}{c}}
{{\rm{diag}}({\bf{m}}_k^*){\bf{g}}{{\bf{g}}^H}{\rm{diag}}({\bf{m}}_k^{})}&{{\rm{diag}}({\bf{m}}_k^*){\bf{g}}h_k^*}\\
{{h_k}{{\bf{g}}^H}{\rm{diag}}({\bf{m}}_k^{})}&{h_k^*{h_k}}
\end{array}} \right],\;k \in \mathcal{K}.$
  \State Find $r_m^{\rm{up}}$ by solving $({\rm{P}}8)$. 
  \State Vary ${r_m}$ within $[0,{r_m^{\rm{up}}}]$ and obtain ${{\overline R }_c}({r_m})$ through the following steps.
  \State {\bf{for}} $t=1:T_{\alpha}$ {\bf{do}}
  \State \quad $\alpha_t =P(t - 1)/({T_\alpha } - 1)$.
   \State \quad Find ${\bf{Y}}$ and $\xi$ by solving $(\hat{\rm{P}}10)$ with $\alpha=\alpha_t$.
  \State \quad {\bf{if}} $(\hat{\rm{P}}10)$ is feasible {\bf{then}}
  \State \qquad Perform the GRP and obtain a feasible PS vector ${\bf{v}}_t$ by (\ref{grp}).
  \State \qquad Compute ${R_c}(r_m,\alpha _t) = \mathop {\min }\limits_{k \in \mathcal{E}} \;{\left\lceil {\log \frac{{\sigma _1^2\sigma _k^2{\rm{ + }}\sigma _k^2{\alpha _t}|{\bf{m}}_1^H{\rm{diag}}({\bf{v}}_t^*){\bf{g}} + {h_1}{|^2}}}{{\sigma _1^2\sigma _k^2{\rm{ + }}\sigma _1^2{\alpha _t}|{\bf{m}}_k^H{\rm{diag}}({\bf{v}}_t^*){\bf{g}} + {h_k}{|^2}}}} \right\rceil ^ + }$. 
  \State \qquad {\bf{if}} ${R_c}(r_m,\alpha _t)>{{\overline R }_c}({r_m})$ {\bf{then}} ${{\overline R }_c}({r_m})={R_c}(r_m,\alpha _t)$. {\bf{end if}}
      \State \quad {\bf{end if}}
  \State {\bf{end for}}
  \Ensure   ${{\overline R }_c}({r_m})$.
    \end{algorithmic}
\end{algorithm}

It is worth noting that the logarithm of $C({r_m},\tilde\alpha)$ in $(\hat{\rm{P}}10)$ serves as an upper bound on the maximum achievable secrecy rate, which will be used as a benchmark in Section \ref{nu} to evaluate the performance of our proposed algorithms. Given an arbitrary $\tilde\alpha \in [0,P]$, we find via simulations that the achievable secrecy rate can reach the upper bound in most cases, i.e., ${{R}_c}({r_m},\tilde \alpha ) = \log [C({r_m},\tilde \alpha )]$. In this case, $(\hat{\rm{P}}10)$ is tight and any boundary point of the secrecy rate region can be characterized as long as ${R_c}({r_m},\tilde \alpha )$ is found. However, $\tilde \alpha$  may not be exactly found by the one-dimensional search with uniform and finite sampling in Algorithm 1. Besides, $R_c({r_m},\alpha)$ does not monotonically change with $\alpha$ since it is also affected by the optimal PS vector ${\bf{v}}$. Thus, it is challenging to characterize the relationship between the performance gap and the uniform search times in Algorithm 1. In the following Proposition 2, we first characterize the relationship by assuming $(\hat{\rm{P}}10)$ is tight.

\begin{proposition}
Assume that $(\hat{\rm{P}}10)$ is tight, i.e., ${R_c}({r_m},\tilde \alpha )$ is the globally optimal solution to ${\rm{(P7)}}$. $T_\alpha$ and ${{\overline R}_c}({r_m})$ are the number of sampling points and the obtained solution by Algorithm 1, respectively. Let $R_\triangle \triangleq {R_c}({r_m},\tilde \alpha ) - {{\overline R}_c}({r_m})$ be the performance gap. Then, we have
\begin{equation}\label{30}
{R_\triangle} \leq \log \left[ {1 + \frac{{P(N + 1){\rm{Tr(}}{{\bf{T}}_1})}}{{\sigma _1^2({T_\alpha } - 1)}}} \right].
\end{equation}
\end{proposition}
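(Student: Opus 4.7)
The plan is to quantify the suboptimality by transporting the optimizer of $(\hat{\rm{P}}10)$ at $\tilde\alpha$ to a nearby grid point $\alpha_{t^*}$ and tracking the first-order loss in the objective. Because the search grid $\{\alpha_t\}$ partitions $[0,P]$ uniformly with spacing $P/(T_\alpha-1)$, I can pick the largest sample not exceeding $\tilde\alpha$, so that $0\le\tilde\alpha-\alpha_{t^*}\le P/(T_\alpha-1)$. Since ${{\overline R}_c}({r_m})\ge R_c(r_m,\alpha_{t^*})$ and the tightness hypothesis implies $R_c(r_m,\alpha)=\log C(r_m,\alpha)$, it then suffices to upper-bound $\log[C(r_m,\tilde\alpha)/C(r_m,\alpha_{t^*})]$.

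Next I would verify that the optimizer $(\tilde{\bf Y},\tilde\xi)$ of $(\hat{\rm{P}}10)$ at $\tilde\alpha$ is still feasible for $(\hat{\rm{P}}10)$ at $\alpha_{t^*}$. Since $\tilde{\bf Y}\succeq{\bf 0}$ and ${\bf T}_k\succeq{\bf 0}$, decreasing $\alpha$ lowers the LHS of (\ref{28b}) and raises the LHS of (\ref{28c}); constraints (\ref{28d})--(\ref{28e}) do not involve $\alpha$. Evaluating the objective of $(\hat{\rm{P}}10)$ at $\alpha_{t^*}$ on this feasible point gives
\begin{equation*}
C(r_m,\alpha_{t^*})\ge C(r_m,\tilde\alpha)-(\tilde\alpha-\alpha_{t^*})\,{\rm Tr}(\tilde{\bf Y}{\bf T}_1),
\end{equation*}
so the task reduces to bounding $(\tilde\alpha-\alpha_{t^*})\,{\rm Tr}(\tilde{\bf Y}{\bf T}_1)/C(r_m,\alpha_{t^*})$ by a constant depending only on the problem data.

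Two elementary PSD inequalities handle this. From (\ref{28d}), ${\rm Tr}(\tilde{\bf Y})=(N+1)\tilde\xi$; combined with ${\bf T}_1\succeq{\bf 0}$, the standard trace bound ${\rm Tr}(AB)\le{\rm Tr}(A)\,{\rm Tr}(B)$ for PSD $A,B$ yields ${\rm Tr}(\tilde{\bf Y}{\bf T}_1)\le(N+1)\tilde\xi\,{\rm Tr}({\bf T}_1)$. Discarding the non-negative $\alpha_{t^*}{\bf T}_1$ term in the objective yields the matching lower bound $C(r_m,\alpha_{t^*})\ge\sigma_1^2\tilde\xi$. Dividing cancels $\tilde\xi$, and applying $\tilde\alpha-\alpha_{t^*}\le P/(T_\alpha-1)$ gives
\begin{equation*}
\frac{C(r_m,\tilde\alpha)}{C(r_m,\alpha_{t^*})}\le 1+\frac{P(N+1)\,{\rm Tr}({\bf T}_1)}{\sigma_1^2(T_\alpha-1)};
\end{equation*}
taking logarithms then proves (\ref{30}).

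The step I expect to demand the most care is the pairing of the two bounds in the last paragraph: the numerator and denominator estimates must both scale linearly in $\tilde\xi$ so that this unknown optimizer value cancels cleanly, leaving only the problem data in the final expression. I would also briefly justify that the tightness hypothesis carries over from $\tilde\alpha$ to the grid point $\alpha_{t^*}$ so that Algorithm 1's GRP step attains $\log C(r_m,\alpha_{t^*})$ exactly; when the SDR is tight its optimum is already rank one, so GRP simply reads off the phase-shift vector without loss.
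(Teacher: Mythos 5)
Your proposal is correct and follows essentially the same route as the paper's Appendix B: take the closest grid point below $\tilde\alpha$, show the optimizer $(\tilde{\bf Y},\tilde\xi)$ of $(\hat{\rm P}10)$ at $\tilde\alpha$ remains feasible there, and bound the objective ratio via ${\rm Tr}(\tilde{\bf Y}{\bf T}_1)\le(N+1)\tilde\xi\,{\rm Tr}({\bf T}_1)$, the lower bound $\sigma_1^2\tilde\xi$ from the identity-term of the objective, and the grid spacing $P/(T_\alpha-1)$. Your closing remark about tightness carrying over to the grid point matches the paper's implicit use of $\bar R_c(r_m)=\log\max_t C(r_m,\alpha_t)$ under the tightness assumption, so nothing is missing.
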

\begin{proof}
See Appendix B.
\end{proof}
Proposition 2 shows that if the solution obtained by steps 7-10 in Algorithm 1 is equal to the performance upper bound, i.e., ${{R}_c}({r_m}, \alpha ) = \log [C({r_m}, \alpha )]$, the performance gap ${R_\triangle }$ decreases to zero when $T_{\alpha}$ is sufficiently large. Due to the difficulty in characterizing the rank profile of the solution to $(\hat{\rm{P}}10)$ analytically, it is important to further characterize the relationship between ${R_\triangle }$ and $T_{\alpha}$ in the case that $(\hat{\rm{P}}10)$ is not tight, i.e., ${{R}_c}({r_m}, \alpha ) \leq \log [C({r_m}, \alpha )]$. 
\begin{corollary}
Let $c =  \arg\mathop{\max }\limits_{t = 1,2,...,{T_\alpha }} {R_c}({r_m},{\alpha _t})$ be the index of the best sampling points in Algorithm 1, i.e., ${R_c}({r_m},{\alpha _c})={\bar R_c}({r_m})$. Then, we have
\begin{equation}\label{38}
{R_\triangle } \leq \log \left[ {1 + \frac{{P(N + 1){\rm{Tr(}}{{\bf{T}}_1})}}{{\sigma _1^2({T_\alpha } - 1)}}} \right] + {\Delta _c}.
\end{equation}
where ${\Delta _c} = \log [C({r_m},{\alpha _c})] - {R_c}({r_m},{\alpha _c})$ is the gap between the performance upper-bound and Algorithm 1. Without knowing ${\Delta _c}$, it ensures the worst-case performance 
\begin{equation}\label{39}
{R_\triangle } \leq \log \left[ {\frac{4}{\pi } + \frac{{4P(N + 1){\rm{Tr}}({{\bf{T}}_1})}}{{\pi \sigma _1^2({T_\alpha } - 1)}}} \right].
\end{equation}
\end{corollary}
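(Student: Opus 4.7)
The plan is to extend Proposition~2's sampling argument by explicitly tracking the SDR--achievability gap that tightness would have made vanish. Because $(\hat{\rm P}10)$ is a relaxation of $({\rm P}9)$, its optimal value $C(r_m,\alpha)$ upper-bounds $2^{R_c(r_m,\alpha)}$ at every feasible $\alpha$, and in particular $R_c(r_m,\tilde\alpha)\le \log C(r_m,\tilde\alpha)$. I would therefore decompose
\[
R_\triangle \;\le\; \bigl[\log C(r_m,\tilde\alpha)-\log C(r_m,\alpha_{t^\star})\bigr]+\bigl[\log C(r_m,\alpha_{t^\star})-R_c(r_m,\alpha_c)\bigr],
\]
where $\alpha_{t^\star}$ is the grid point closest to $\tilde\alpha$, so that $|\tilde\alpha-\alpha_{t^\star}|\le P/(T_\alpha-1)$.

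The first bracket is precisely the quantity already controlled in Appendix~B: feeding the $(\mathbf{Y}^\star,\xi^\star)$ optimal for $(\hat{\rm P}10)$ at $\alpha=\tilde\alpha$ back into the same SDP at $\alpha=\alpha_{t^\star}$ and using the $N+1$ equalities $\mathrm{Tr}(\mathbf{Y}\mathbf{E}_i)=\xi$ to estimate $\mathrm{Tr}(\mathbf{Y}^\star\mathbf{T}_1)$ yields
\[
\log C(r_m,\tilde\alpha)-\log C(r_m,\alpha_{t^\star})\;\le\;\log\!\Big[1+\tfrac{P(N+1)\mathrm{Tr}(\mathbf{T}_1)}{\sigma_1^2(T_\alpha-1)}\Big].
\]
For the second bracket, the optimality of $\alpha_c$ on the grid gives $R_c(r_m,\alpha_{t^\star})\le R_c(r_m,\alpha_c)$, so the bracket is at most $\Delta_{t^\star}:=\log C(r_m,\alpha_{t^\star})-R_c(r_m,\alpha_{t^\star})$, which I would in turn bound by $\Delta_c$ using the uniform per-grid GRP guarantee recalled below. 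Combining the two bounds produces (\ref{38}).

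For the worst-case statement (\ref{39}), I would invoke the classical approximation ratio of complex Gaussian randomization applied to rank-one SDRs with diagonal/unit-modulus constraints (as in the GRP used after $(\hat{\rm P}10)$): the randomized rank-one solution achieves, in expectation, at least $\pi/4$ of the SDR optimum, hence $2^{R_c(r_m,\alpha)}\ge (\pi/4)\,C(r_m,\alpha)$ at \emph{every} grid point, which in the log-domain reads $\Delta_c\le\log(4/\pi)$ (and analogously for $\Delta_{t^\star}$). Substituting this uniform bound into (\ref{38}) and merging the two logarithms reproduces the bracketed form in (\ref{39}).

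The main obstacle I expect is the replacement of $\Delta_{t^\star}$ by $\Delta_c$ in (\ref{38}): $\alpha_c$ is selected on the grid by the achievable rate $R_c$, not by the SDR gap, so a priori the two quantities can differ. The clean way to close this step is to upper-bound the gap uniformly over the grid by the $\pi/4$-ratio argument used for (\ref{39}); in practice, $\Delta_c$ is the quantity actually reported by Algorithm~1 and is typically no larger than $\Delta_{t^\star}$, making (\ref{38}) the sharper estimate and (\ref{39}) the universal, data-independent safety net.
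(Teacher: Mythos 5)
Your argument establishes (\ref{39}) but does not establish (\ref{38}) as stated. By adding and subtracting $\log C(r_m,\alpha_{t^\star})$ at the grid point near $\tilde\alpha$, you end up with the residual term $\Delta_{t^\star}=\log C(r_m,\alpha_{t^\star})-R_c(r_m,\alpha_{t^\star})$, and, as you concede, there is no a priori relation between $\Delta_{t^\star}$ and $\Delta_c$; replacing it by the uniform $\log(4/\pi)$ GRP bound recovers (\ref{39}) but forfeits precisely what (\ref{38}) is for, namely a data-dependent certificate in terms of the quantity $\Delta_c$ that Algorithm 1 can actually report (solve $(\hat{\rm P}10)$ once at $\alpha_c$ and compare with $R_c(r_m,\alpha_c)$). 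The paper's proof avoids your obstacle by decomposing at $\alpha_c$ rather than at $\alpha_{t^\star}$: it writes $R_\triangle=\log[C(r_m,\tilde\alpha)]-\log[C(r_m,\alpha_c)]-\tilde\Delta+\Delta_c$ with $\tilde\Delta\ge 0$, drops $-\tilde\Delta$, bounds $\log C(r_m,\tilde\alpha)-\log C(r_m,\alpha_c)$ by the Proposition 2 perturbation argument after identifying $C(r_m,\alpha_c)$ with $\max_t C(r_m,\alpha_t)\ge C(r_m,\dot\alpha)$, and only then applies the $\pi/4$ ratio of \cite{sdr} at $\alpha_c$ (i.e., $2^{\Delta_c}\le 4/\pi$) to pass from (\ref{38}) to (\ref{39}). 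So the concrete missing move is to anchor the add-and-subtract step at $\alpha_c$ so that the leftover gap is exactly $\Delta_c$. (Your reservation is not misplaced: the paper's step implicitly uses that the rate-optimal grid point also maximizes the SDR value $C(r_m,\alpha_t)$ over the grid, which is asserted rather than proved; but that identification is exactly how $\Delta_c$, and not $\Delta_{t^\star}$, appears in (\ref{38}).)

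Two smaller remarks. First, when reusing the Appendix B argument you must take $\alpha_{t^\star}$ to be the closest grid point \emph{below} $\tilde\alpha$ (the paper's $\dot\alpha$): feasibility of the optimal $\tilde{\bf Y}$ at the shifted power relies on $\delta=\tilde\alpha-\dot\alpha\ge 0$ in the QoMS and wiretap-channel trace constraints, and fails in general if the nearest grid point lies above $\tilde\alpha$; the spacing bound $P/(T_\alpha-1)$ is unaffected. Second, your derivation of (\ref{39}) itself, i.e., the grid perturbation bound plus the uniform $\pi/4$ randomization guarantee, is sound and coincides with the paper's use of the same approximation result, only invoked at a different grid point.
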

\begin{proof}
See Appendix C.
\end{proof} 
To bound the performance gap ${R_\triangle } = {R_c}({r_m},\tilde \alpha ) - {R_c}({r_m},{\alpha _c})$, although it seems that the tightness levels of $(\hat{\rm{P}}10)$ in terms of $\alpha=\tilde \alpha$ and $\alpha = \alpha _c$ are both needed, (\ref{38}) suggests that it suffices to know the latter, i.e., the gap between ${R_c}({r_m}, \alpha_c )$ and $\log [C({r_m}, \alpha_c )]$. When $T_{\alpha}$ is sufficiently large, ${R_\triangle }$ must be less than ${\Delta _c}$. Consequently, after obtaining a solution from Algorithm 1, one can obtain the index $c$ and check ${\Delta _c}$ by calculating $\log [C({r_m},{\alpha _c})]$, thereby evaluating the quality of this solution. Meanwhile, (\ref{39}) guarantees that ${R_\triangle }$  must be no more than $\log (4/\pi )$ even without any knowledge of ${\Delta _c}$.
\subsection{WSCM-based Algorithm for the Multi-User Case}
In this subsection, we extend the WSCM-based algorithm to the general multi-user system. Similarly, we first derive the matrix ${{\bf{Z}}_{m\;}}$ by $(\hat{\rm{P}}8)$ for maximizing the multicast rate. Then we derive the matrix ${{\bf{Z}}_{c\;}}$  for maximizing the secrecy rate by solving the following SDP and setting ${{\bf{Z}}_c} = {{\bf{Y}}_c}/{\xi _c}$, i.e.,
\begin{equation}
\begin{split}
({\rm{P11}}):\;\;\{ {{\bf{Y}}_c},{\xi _c}\}=& \arg\mathop {\max }\limits_{{\bf{Y}},\xi } {\rm{Tr}}[{\bf{Y}}(\frac{{\sigma _1^2}}{{N + 1}}{\bf{I}} + P{{\bf{T}}_1})]\\
&{\rm{s}}.{\rm{t}}.\;{\rm{Tr}}[{\bf{Y}}(\frac{{\sigma _1^2}}{{N + 1}}{\bf{I}} + \frac{{\sigma _1^2}}{{\sigma _k^2}}P{{\bf{T}}_k})] \le 1,\;\forall k \in {\cal E},\\
&\;\;\;\quad{\rm{Tr}}({\bf{Y}}{{\bf{E}}_i}) = \xi ,\;\;i = 1,2,...,N + 1,\\
&\;\;\;\quad{\bf{Y}}\succeq {\bf{0}},\;\alpha  \ge 0,\;P - \alpha  \ge 0,\;\xi  > 0.
\end{split}
\end{equation}
Next, we construct the WSCM by using (\ref{19}). By varying the weight $\lambda$ within $[0,1]$, we need to find the optimal one that leads to the maximum secrecy rate by solving
\begin{subequations}
\begin{align}
{\rm{(P12)}}\;\;R(\lambda) =& \mathop {\max }\limits_{\alpha } \mathop{\min}\limits_{k \in \mathcal{E}} {\left\lceil {\log \frac{{\sigma _1^2\sigma _k^2{\rm{ + }}\sigma _k^2\alpha |{\bf{m}}_1^H{\rm{diag[}}{\bf{v}}{{(\lambda )}^*}]{\bf{g}} + {h_1}{|^2}}}{{\sigma _1^2\sigma _k^2{\rm{ + }}\sigma _1^2\alpha |{\bf{m}}_k^H{\rm{diag[}}{\bf{v}}{{(\lambda )}^*}]{\bf{g}} + {h_k}{|^2}}}} \right\rceil ^ + }\label{12a}\\
&{\rm{s}}{\rm{.t}}{\rm{.,}}\;\log \left( {1 + \frac{{(P-\alpha )|{\bf{m}}_k^H{\rm{diag[}}{\bf{v}}{{(\lambda )}^*}]{\bf{g}} + {h_k}{|^2}}}{{\sigma _k^2 + \alpha |{\bf{m}}_k^H{\rm{diag[}}{\bf{v}}{{(\lambda )}^*}]{\bf{g}} + {h_k}{|^2}}}} \right) \ge {r_m},\;\forall k \in \mathcal{E}.
\label{12b}\\
&\qquad \alpha  \ge 0,\;P - \alpha  \ge 0,\;\lambda  \ge 0,\;1 - \lambda  \ge 0.
\end{align}
\end{subequations}
Let $\tau {\rm{ = }}\arg {\min _k}|{\bf{m}}_k^H{\rm{diag}}[{\bf{v}}{(\lambda )^*}]{\bf{g}} + {h_k}{|^2}/\sigma _k^2.$ Due to the monotonicity of $(P - \alpha )x/(1+ \alpha x)$ w.r.t. $x$ (where $x\in \{|{\bf{m}}_k^H{\rm{diag}}[{\bf{v}}{(\lambda )^*}]{\bf{g}} + {h_k}{|^2}/\sigma _k^2\}_{k=1}^K$), we only need to consider the QoMS constraint of user $\tau$ in (\ref{12b}). Since the objective function of (\ref{12a}) increases with $\alpha$, the optimal $\alpha_{\rm{opt}}$ should be the largest $\alpha$ that satisfies the QoMS constraint for user $\tau$, which is given by 
 \begin{equation}\label{aopt}
{\alpha _{{\rm{opt}}}} = {\left\lceil {\min \{ P,\frac{{P|{\bf{m}}_\tau ^H{\rm{diag}}[{\bf{v}}{{(\lambda )}^*}]{\bf{g}} + {h_\tau}{|^2} - ({2^{{r_m}}} - 1)\sigma _\tau ^2}}{{{2^{{r_m}}}|{\bf{m}}_\tau ^H{\rm{diag}}[{\bf{v}}{{(\lambda )}^*}]{\bf{g}} + {h_\tau}{|^2}}}\} } \right\rceil ^ + }.
\end{equation}
By substituting (\ref{aopt}) into ${\rm{(P12)}}$, we can find $\lambda _{\rm{opt}}$ via a one dimensional search within $[0,1]$. All detailed steps are summarized in Algorithm 2, where $T_{\lambda}$ is the number of uniformly sampling points of $\lambda$.
 \begin{algorithm}
  \caption{WSCM-based Algorithm for Transmit Design with QoMS.}
  \begin{algorithmic}[1]
  \Require  $r_m,\;P,\;\{\sigma _k^2\}_{k=1}^K,\;{\bf{g}},\; \{h_k\}_{k=1}^{K},\;\{{\bf{m}}_k^H\}_{k=1}^K.$ 
  \State {\bf{Initialization:}} $\mathcal{K}\triangleq \{1,2,...,K\}$, $\mathcal{E}\triangleq \mathcal{K}/\{1\}$, ${{\overline R }_c}({r_m})=0$.  \\
   Compute ${{\bf{T}}_k} = \left[ {\begin{array}{*{20}{c}}
{{\rm{diag}}({\bf{m}}_k^*){\bf{g}}{{\bf{g}}^H}{\rm{diag}}({\bf{m}}_k^{})}&{{\rm{diag}}({\bf{m}}_k^*){\bf{g}}h_k^*}\\
{{h_k}{{\bf{g}}^H}{\rm{diag}}({\bf{m}}_k^{})}&{h_k^*{h_k}}
\end{array}} \right],\;k \in \mathcal{K}.$
  \State Find ${{\bf{Z}}_{m\;}}$ by solving $({\rm{P}}8)$.
  \State Find ${{\bf{Z}}_{c\;}}$ by solving $({\rm{P}}11)$.
  \State {\bf{for}} $t=1:T_{\lambda}$ {\bf{do}}
  \State \quad $\lambda_t =(t - 1)/({T_\lambda } - 1)$.
  \State \quad ${{\bf{Z}}}(\lambda_t) = {\lambda_t}{{\bf{Z}}_c} + (1 - {\lambda_t}){{\bf{Z}}_m}.$
  \State \quad Perform the GRP and obtain a feasible PS vector ${\bf{v}}_t$ by (\ref{grp}).
  \State \quad $\tau {\rm{ = }}\arg {\min _k}|{\bf{m}}_k^H{\rm{diag}}[{\bf{v}}{(\lambda_t)^*}]{\bf{g}} + {h_k}{|^2}/\sigma _k^2,$ 
  \State \quad ${\rho _t} = |{\bf{m}}_\tau ^H{\rm{diag}}[{\bf{v}}{({\lambda _t})^*}]{\bf{g}} + {h_\tau }{|^2}/\sigma _\tau ^2.$
   \State \quad ${\alpha _t} = {\left\lceil {\min \{ P,\frac{P}{{{2^{{r_m}}}}} - \frac{1}{{{\rho _t}}} + \frac{1}{{{2^{{r_m}}}{\rho _t}}}\} } \right\rceil ^ + }.$
  \State \quad Compute ${R_c}(t) = \mathop {\min }\limits_{k \in \mathcal{E}} \;{\left\lceil {\log \frac{{\sigma _1^2\sigma _k^2{\rm{ + }}\sigma _k^2{\alpha _t}|{\bf{m}}_1^H{\rm{diag}}({\bf{v}}_t^*){\bf{g}} + {h_1}{|^2}}}{{\sigma _1^2\sigma _k^2{\rm{ + }}\sigma _1^2{\alpha _t}|{\bf{m}}_k^H{\rm{diag}}({\bf{v}}_t^*){\bf{g}} + {h_k}{|^2}}}} \right\rceil ^ + }$. 
  \State \qquad {\bf{if}} ${R_c}(t)>{{\overline R }_c}({r_m})$ {\bf{then}} ${{\overline R }_c}({r_m})={R_c}(t)$. {\bf{end if}}
      \State \quad {\bf{end if}}
  \State {\bf{end for}}
  \Ensure   ${{\overline R }_c}({r_m})$.
    \end{algorithmic}
\end{algorithm}

\subsection{Complexity Analysis}
In this subsection, we give the computational complexity of the proposed algorithms in finding a boundary point of the secrecy rate region. The complexity of both the CCT-based and  WSCM-based algorithms depend on the line search precision. We assume that a standard interior-point method is used to solve the SDPs required in the proposed algorithms and $T_{g}$ times of randomizations are used in GRP to obtain a feasible solution. We use ${A_n^m}$ to represent the computational complexity of the $m$th part in Algorithm $n$.  In Algorithm 1,  $({\rm{P}}8)$ is subjected to one linear matrix inequality (LMI) constraint of size $N+1$, $K-1$ LMI constraints of size 1, and $N+1$ linear matrix equality (LME) constraints of size 1. The computational cost incurred to find an $\epsilon$-optimal solution to $(\hat{\rm{P}}8)$ is on the order of $\ln (1/\epsilon )\sqrt {{\psi }} {\zeta }$, where ${\psi }=2N+K+1$ is the geometric complexity, ${\zeta } = n[{(N + 1)^3} + K + N] + {n^2}[{(N + 1)^2} + K + N] + {n^3}$ is the cost per iteration, and $n = {\mathcal{O}}[{(N + 1)^2} + 1]$. Hence, the complexity of  $({\rm{P}}8)$  is given by (suppressing $\ln (1/\epsilon )$)
\begin{equation}\label{35}
{A_1^1}= \sqrt {2N + K + 1} \left[ {n\left( {{{(N + 1)}^3} + K + N} \right) + {n^2}\left( {{{(N + 1)}^2} + K + N} \right) + {n^3}} \right].
\end{equation}
$(\hat{\rm{P}}10)$ is subjected to one LMI constraint of size $N+1$, $2K-1$ LMI constraints of size 1, and $N+1$ LME constraints of size 1. Thus, the  complexity of $({\rm{P}}10)$  is given by 
\begin{equation}
{A_1^2}=\sqrt {2N + 2K + 1} \left[ {n\left( {{{(N + 1)}^3} + 2K + N} \right) + {n^2}\left( {{{(N + 1)}^2} + 2K + N} \right) + {n^3}} \right].
\end{equation}
The complexity of GRP is on the order of $G={(N + 1)^3} + 8{T_g}{(N + 1)^2}$.
As a result, the complexity of Algorithm 1 for finding a boundary point ${{\overline R }_c}({r_m})$ is on the order of 
\begin{equation}
{A_1}={A_1^1}+{T_\alpha }({A_1^2}+G).
\end{equation}
In Algorithm 2,  the complexity of  $({\rm{P}}8)$  is given by (\ref{35}) and we have ${A_2^1}={A_1^1}$. $({\rm{P}}11)$ is subjected to one LMI constraint of size $N+1$, $K+2$ LMI constraints of size 1, and $N+1$ LME constraints of size 1. Thus, the complexity of $({\rm{P}}11)$  is given by 
\begin{equation}
{A_2^2} = \sqrt {2N + K + 4} \left[ {n\left( {{{(N + 1)}^3} + K + N+3} \right) + {n^2}\left( {{{(N + 1)}^2} + K + N+3} \right) + {n^3}} \right].
\end{equation}
Consequently, the complexity of Algorithm 2 for finding a boundary point is on the order of 
\begin{equation}
{A_2}={A_2^1}+A_2^2+{T_\lambda }G.
\end{equation}
We find via simulation that it only takes $T_\lambda<\mathcal{O}(N^2)$ search times to achieve a comparable performance to the upper bound. Accordingly, the overall complexity of Algorithm 1 and Algorithm 2 are given by $\mathcal{O}(T_\alpha N^{5.5})$ and $\mathcal{O}(N^{5.5})$, respectively.
\section{Theoretical Analysis}
In this section, we analyze the impact of IRS passive beamforming or PSs on the performance of PHY-SI. Comparing to the traditional system without IRS, we aim to figure out how the IRS affects the secrecy rate region or its underlying mechanism. For simplicity, we take the two-user case as an example while the insights revealed can be extended to the general multi-user case similarly. The secrecy rate achieved in the traditional system is given by 
 \begin{equation}\label{srt}
 R_c^{\rm{non}}={\left\lceil {\log \left( {\frac{{1 + \alpha \frac{{|{h_1}{|^2}}}{{\sigma _1^2}}}}{{1 + \alpha\frac{{|{h_2}{|^2}}}{{\sigma _2^2}}}}} \right)} \right\rceil ^ + }.
  \end{equation}
We assume non-zero $R_c^{\rm{non}}$ is achievable; thus, we have
\begin{equation}\label{n41}
\frac{{|{h_1}{|^2}}}{{\sigma _1^2}} > \frac{{|{h_2}{|^2}}}{{\sigma _2^2}}.
\end{equation}
With the assistance of IRS, the secrecy rate achieved in the IRS-assisted system is given by 
\begin{equation}\label{sri}
{R_c^{\rm{IRS}}} = {\left\lceil {\log \left( {\frac{{1 + \alpha \frac{{|{\bf{m}}_1^H{\rm{diag}}({\bf{v}}^*){\bf{g}} + {h_1}{|^2}}}{{\sigma _1^2}}}}{{1 + \alpha \frac{{|{\bf{m}}_2^H{\rm{diag}}({\bf{v}}^*){\bf{g}} + {h_2}{|^2}}}{{\sigma _2^2}}}}} \right)} \right\rceil ^ + }.
\end{equation}
Considering the monotonicity of the logarithm function, we only focus on the term inside it. Define the following function
\begin{equation}
r(\alpha ) = \frac{{1 + \alpha \frac{{|{h_1}{|^2}}}{{\sigma _1^2}}}}{{1 + \alpha \frac{{|{h_2}{|^2}}}{{\sigma _2^2}}}}.
\end{equation}
It is easy to verify that $r(\alpha )$ increases with $\alpha$ under the condition (\ref{n41}), which implies that \emph{a larger $\alpha$ yields better secrecy-rate performance.} Given a desired multicast rate $r_m$, the optimal $\alpha$ (see (\ref{aop})) can be rewritten as 
\begin{equation}\label{n44}
{\alpha ^{{\rm{opt}}}} = \frac{{Px - {c_1}}}{{{c_2}x}},
\end{equation}
where ${c_1} = ({2^{{r_m}}} - 1)\sigma _2^2$, ${c_2} = {2^{{r_m}}}$, and 
\begin{equation}\label{n45}
x = \left\{ {\begin{split}
&{|{h_2}{|^2},\;\;\;{\rm{without}}\;{\rm{IRS}}}\\
&{|{\bf{m}}_2^H{\rm{diag}}({\bf{v}}^*){\bf{g}} + {h_2}{|^2}},\;{\rm{with\;IRS}}
\end{split}}. \right.
\end{equation}
By checking the first-order derivative, it can be verified that ${\alpha ^{{\rm{opt}}}}$ monotonically increases with $x$. Thus, compared to the traditional no-IRS system, if $|{\bf{m}}_2^H{\rm{diag}}[{\bf{v}}{(\lambda )^*}]{\bf{g}} + {h_2}{|^2} > |{h_2}{|^2}$, the power ${\alpha ^{{\rm{opt}}}}$ is larger in the IRS-assisted system. \emph{Hence, the larger channel gain of user 2 brought by the IRS, the larger portion of transmit power $\alpha$ allocated to the confidential message.} 

Next, we discuss how the variation of the channels affects the secrecy rate in the IRS-assisted system. Assume that in both systems, we allocate an equal transmit power $\alpha $ to the confidential message. Let $E_1$ and $E_2$ represent the channel enhancement brought by the IRS for user 1 and user 2, respectively, i.e.,  
\begin{equation}
{E_k} = \frac{{1 + \alpha \frac{{|{\bf{m}}_k^H{\rm{diag}}({\bf{v}}^*){\bf{g}} + {h_k}{|^2}}}{{\sigma _k^2}}}}{{1 + \alpha \frac{{|{h_k}{|^2}}}{{\sigma _k^2}}}},\;\;k \in \{ 1,2\}.
\end{equation}
As such, the secrecy rates $R_c^{\rm{non}}$ and $R_c^{\rm{IRS}}$ (defined in (\ref{srt}) and (\ref{sri}), respectively) follow   
\begin{equation}
{2^{R_c^{\rm{IRS}}}} = \frac{{{E_1}}}{{{E_2}}} \cdot {2^{R_c^{\rm{non}}}}.
\end{equation}
Let $\eta=\frac{{{E_1}}}{{{E_2}}}$ represent the resulting secrecy-rate enhancement with fixed ${\alpha}$. It is obvious that if $\eta>1$, the secrecy rate achieved is larger in the IRS-assisted system. Moreover, a larger $\eta$ yields a larger security performance gain.

Based on the above discussions, it is noted that the maximum achievable secrecy rate in IRS-assisted system is proportional to $\alpha $ and $\eta$. {\it Thus, the security performance in the IRS-assisted system would outperform that in the traditional system if both $\alpha$ and $\eta$ are enlarged by IRS}. To enlarge $\alpha$ under a given QoMS constraint, one can leverage IRS to enhance the channel gain of user 2. However, while IRS improves the channel gain of user 2, it also changes the channel gain of user 1. If the enhancement brought to user 2 is larger than that to user 1, i.e., $E_2 > E_1$, the secrecy-rate performance, in turn, will be impaired. As a result, there exists an interesting trade-off between enhancing the channel of user 1 and that of user 2.
\begin{proposition}
For any given IRS with PS vector ${{\bf{v}}}$, compared to the traditional system, the secrecy-rate performance of the IRS-assisted system is higher if 
 \begin{equation}\label{n48}
 \begin{split}
&|{\bf{m}}_2^H{\rm{diag}}({{\bf{v}}^*}){\bf{g}} + {h_2}| > |{h_2}|\;\;\;{\rm{and}}\\
&|{h_2}|\cdot|{\bf{m}}_1^H{\rm{diag(}}{{\bf{v}}^*}){\bf{g}} + {h_1}| > |{h_1}|\cdot|{\bf{m}}_2^H{\rm{diag(}}{{\bf{v}}^*}){\bf{g}} + {h_2}|.
\end{split}
 \end{equation}
 On the contrary, IRS must impair the secrecy-rate performance if 
 \begin{equation}\label{n49}
 \begin{split}
&|{\bf{m}}_2^H{\rm{diag}}({{\bf{v}}^*}){\bf{g}} + {h_2}| < |{h_2}|\;\;\;{\rm{and}}\\
&|{h_2}|\cdot|{\bf{m}}_1^H{\rm{diag(}}{{\bf{v}}^*}){\bf{g}} + {h_1}| < |{h_1}|\cdot|{\bf{m}}_2^H{\rm{diag(}}{{\bf{v}}^*}){\bf{g}} + {h_2}|.
\end{split}
 \end{equation}
 \end{proposition}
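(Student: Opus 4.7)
The plan is to reduce both halves of the proposition to the two effects identified in the discussion preceding it: the change in the optimal confidential-power $\alpha^{{\rm{opt}}}$ from (\ref{n44}), and the sign of the channel-enhancement ratio $\eta=E_1/E_2$. Setting $a_k := |h_k|^2/\sigma_k^2$ and $b_k := |{\bf{m}}_k^H{\rm{diag}}({\bf{v}}^*){\bf{g}} + h_k|^2/\sigma_k^2$, squaring turns the two inequalities in (\ref{n48}) into $b_2 > a_2$ and $a_2 b_1 > a_1 b_2$ (equivalently $b_1/a_1 > b_2/a_2$). The non-IRS feasibility (\ref{n41}) gives $a_1 > a_2$, and combining it with $b_1/a_1 > b_2/a_2 > 1$ also yields $b_1 > b_2$, so the IRS secrecy-rate expression $r^{{\rm{IRS}}}(\alpha)=(1+\alpha b_1)/(1+\alpha b_2)$ is well-defined and strictly increasing in $\alpha$, just like $r^{{\rm{non}}}(\alpha)$.

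I would then chain the comparison through the common operating point $\alpha^{{\rm{opt}},{\rm{non}}}$. The formula (\ref{n44}) shows $\alpha^{{\rm{opt}}}$ is nondecreasing in $x$, so the IRS value of $x$ being larger gives $\alpha^{{\rm{opt}},{\rm{IRS}}} \ge \alpha^{{\rm{opt}},{\rm{non}}}$; monotonicity of $r^{{\rm{IRS}}}$ in $\alpha$ then yields $r^{{\rm{IRS}}}(\alpha^{{\rm{opt}},{\rm{IRS}}}) \ge r^{{\rm{IRS}}}(\alpha^{{\rm{opt}},{\rm{non}}})$. It therefore suffices to establish $\eta(\alpha^{{\rm{opt}},{\rm{non}}}) > 1$, which is the same as $r^{{\rm{IRS}}}(\alpha) > r^{{\rm{non}}}(\alpha)$ at that common $\alpha$.

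A direct expansion gives
\begin{equation*}
\eta(\alpha) - 1 = \frac{\alpha\bigl[(b_1-b_2)-(a_1-a_2)\bigr] + \alpha^2\bigl[a_2 b_1 - a_1 b_2\bigr]}{(1+\alpha a_1)(1+\alpha b_2)}.
\end{equation*}
Condition $a_2 b_1 > a_1 b_2$ makes the $\alpha^2$-coefficient positive, so the real work is the sign of the linear coefficient; this is where I expect the main obstacle. Writing $r_k := b_k/a_k$, conditions (i)-(ii) together give $r_1 > r_2 > 1$, and since $a_1 > a_2 > 0$,
\begin{equation*}
(b_1-b_2)-(a_1-a_2) = a_1(r_1-1) - a_2(r_2-1) \ge a_2(r_1-1) - a_2(r_2-1) = a_2(r_1-r_2) > 0.
\end{equation*}
Both bracketed terms are therefore strictly positive, so $\eta(\alpha) > 1$ for every $\alpha > 0$, which closes the chain and proves the sufficient condition (\ref{n48}).

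For (\ref{n49}) the template runs symmetrically: the conditions become $b_2 < a_2$ and $b_1/a_1 < b_2/a_2$, so (\ref{n44}) gives $\alpha^{{\rm{opt}},{\rm{IRS}}} \le \alpha^{{\rm{opt}},{\rm{non}}}$, while the same identity now makes both bracketed terms negative (via $r_1 < r_2 < 1$ combined with $a_1 > a_2 > 0$), forcing $\eta(\alpha) < 1$ and hence $R_c^{{\rm{IRS}}} < R_c^{{\rm{non}}}$ whenever both rates are strictly positive. The degenerate sub-case $b_1 \le b_2$ is handled separately, because then $r^{{\rm{IRS}}}\le 1$ and the $[\,\cdot\,]^+$ in (\ref{sri}) forces $R_c^{{\rm{IRS}}}=0$, while (\ref{n41}) still guarantees $R_c^{{\rm{non}}}>0$.
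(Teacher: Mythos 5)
Your proposal is correct and follows essentially the same route as the paper: reduce both conditions to showing that the optimal confidential power $\alpha^{\rm opt}$ (via the monotonicity of (\ref{n44}) in the user-2 channel gain) and the enhancement ratio $\eta=E_1/E_2$ at a fixed $\alpha$ both move in the favorable (resp.\ unfavorable) direction, then chain through the common operating point. The only differences are cosmetic and in your favor: you verify $\eta>1$ by directly expanding $\eta(\alpha)-1$ into a quadratic in $\alpha$ with positive coefficients, whereas the paper sandwiches $E_2<E_1$ using the monotonicity of $\frac{1+m_2\alpha x}{1+\alpha x}$ in $x$, and you additionally make explicit the final chaining step and the degenerate sub-case $b_1\le b_2$ (where the $\lceil\cdot\rceil^+$ forces $R_c^{\rm IRS}=0$), which the paper glosses over with ``it is easy to verify.''
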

\begin{proof}
We point out that (\ref{n48}) ensures that both $\alpha$ and $\eta$ are enlarged by IRS. The first inequality is easy to prove due to the monotonicity of (\ref{n44}). Next, we prove that the second inequality is a sufficient condition to increase $\eta$. Let ${m_k} = |{\bf{m}}_k^H{\rm{diag(}}{{\bf{v}}^*}){\bf{g}} + {h_k}{|^2}/|{h_k}|^2,\;k \in \{ 1,2\}$. Then, the second inequality is equivalent to $m_1>m_2$. Notice that the first inequality implies that $m_2>1$, then we have the following relation
\begin{equation}
\begin{split}
{E_2} &= \frac{{1 + {m_2} \cdot \alpha \frac{{|{h_2}{|^2}}}{{\sigma _2^2}}}}{{1 + \alpha \frac{{|{h_2}{|^2}}}{{\sigma _2^2}}}}< \frac{{1 + {m_2} \cdot \alpha \frac{{|{h_1}{|^2}}}{{\sigma _1^2}}}}{{1 + \alpha \frac{{|{h_1}{|^2}}}{{\sigma _1^2}}}}< \frac{{1 + {m_1} \cdot \alpha \frac{{|{h_1}{|^2}}}{{\sigma _1^2}}}}{{1 + \alpha \frac{{|{h_1}{|^2}}}{{\sigma _1^2}}}} = {E_1},
\end{split}
\end{equation}
where the first inequality is due to (\ref{41}) and that the function $\frac{{1 + {m_1} \cdot \alpha x}}{{1 + \alpha x}}$ increases with $x$ when $m_2>1$. Hence, the second inequality implies that $\eta$ is enlarged by IRS, i.e., $\eta>1$.  On the contrary, it is easy to verify that (\ref{n49}) implies that both $\alpha$ and $\eta$ are diminished, which completes the proof.
\end{proof}
Given an IRS PS vector ${{\bf{v}}}$, Proposition 3 provides two sufficient conditions to easily evaluate the effects of IRS PS design in terms of secrecy-rate performance. In particular, it reveals that {\it not all} IRS PS designs can improve the secrecy-rate performance.

\section{Numerical Results}\label{nu}
In this section, numerical results are provided to demonstrate the performance of the considered IRS-assisted PHY-SI by our proposed algorithms compared to that without IRS. Assuming a three-dimensional (3D) Cartesian coordinate system, we consider a uniform planar array (UPA) at the IRS and assume all terminals are located in the $x$-$z$ plane without loss of generality. The large-scale path loss is given by
\begin{equation}
L(d) = {L_0} + 10\log {\left( {\frac{d}{{{D_0}}}} \right)^a}.
\end{equation}
where $L_0$ is the path-loss at the reference distance $D_0=1$ m, and $a$ is the path-loss exponent. Due to the scattered obstacles, we set the path-loss exponent between the AP and users as $a_{\rm{BU}}=3.75$. By properly choosing the location of the IRS, we assume that all IRS-involved links experience approximately free-space path loss. As such, the path-loss exponents of IRS involved links are much smaller than that of AP-user link and is set as $a_{\rm{IRS}}=2.2$. The small-scale fading is assumed to be Rician fading, i.e.,
\begin{equation}\label{model}
{\bf{H}} = \sqrt {\frac{\kappa }{{1 + \kappa }}} {{\bf{a}}_r}(\phi _r^{},{\omega _r}){{\bf{a}}_t}{(\phi _t^{},{\omega _t})^H} + \sqrt {\frac{1}{{1 + \kappa }}} {{\bf{H}}^{{\rm{NLoS}}}},
\end{equation}
where $\kappa$  is the Rician factor and the variables $\phi _{t}$ (resp. ${\omega _t}$) and $\phi_r \in [0,2\pi )$ (resp. ${\omega _r}$) are the azimuth (resp. elevation) angles of departure and arrival for the line-of-sight (LoS) component, respectively. In the case of a UPA in the $yz$-plane with $N_t=N_y\times N_z$ elements, the array response vector can be expressed as, 
\begin{equation}
{\bf{a}}(\phi, \omega ) = \frac{1}{{\sqrt {{N_t}} }}{[1,...,{e^{jk{d_a}({n_y}\sin \phi \sin \omega  + {n_z}\cos \omega )}},...,{e^{jk{d_a}({N_y}\sin \phi \sin \omega  + {N_z}\cos \omega )}}]^T},
\end{equation} 
where $1 < {n_y} < {N_y}$, $1 < {n_z} < {N_z}$, $k = 2\pi /\lambda $, $\lambda$ is the wavelength, and $d_a$ is the element spacing. In particular, we have ${{\bf{a}}_t}=1$ for AP-IRS link, ${{\bf{a}}_r}=1$ for IRS-user links, and ${{\bf{a}}_t}={{\bf{a}}_r}=1$ for AP-user links. ${\bf{H}}_{}^{{\rm{NLoS}}}$ is assumed to be Rayleigh fading channel and its entries are randomly generated following complex Gaussian distribution with zero mean and unit variance. Unless otherwise specified, in the simulation, we set $N=10$, $P=1$W, $T_{\alpha}=T_{\lambda}=80$, $L_0=30$dB, $d_a = \lambda /2$, $\kappa =10$, and $\{\sigma _k^2\}_{k=1}^K = -80$dBm.

\begin{figure} 
  \begin{minipage}[t]{0.5\linewidth} 
    \centering 
\includegraphics[width=2.8in]{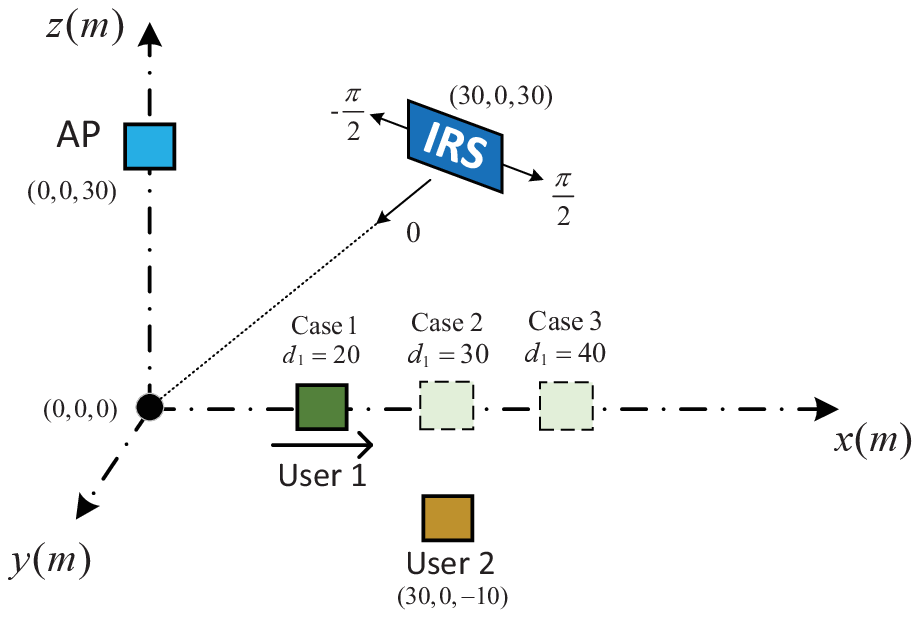}
\caption{Simulation setup of the two-user scenario.}\label{11}
  \end{minipage} 
  \begin{minipage}[t]{0.5\linewidth} 
    \centering 
    \includegraphics[width=3.4in]{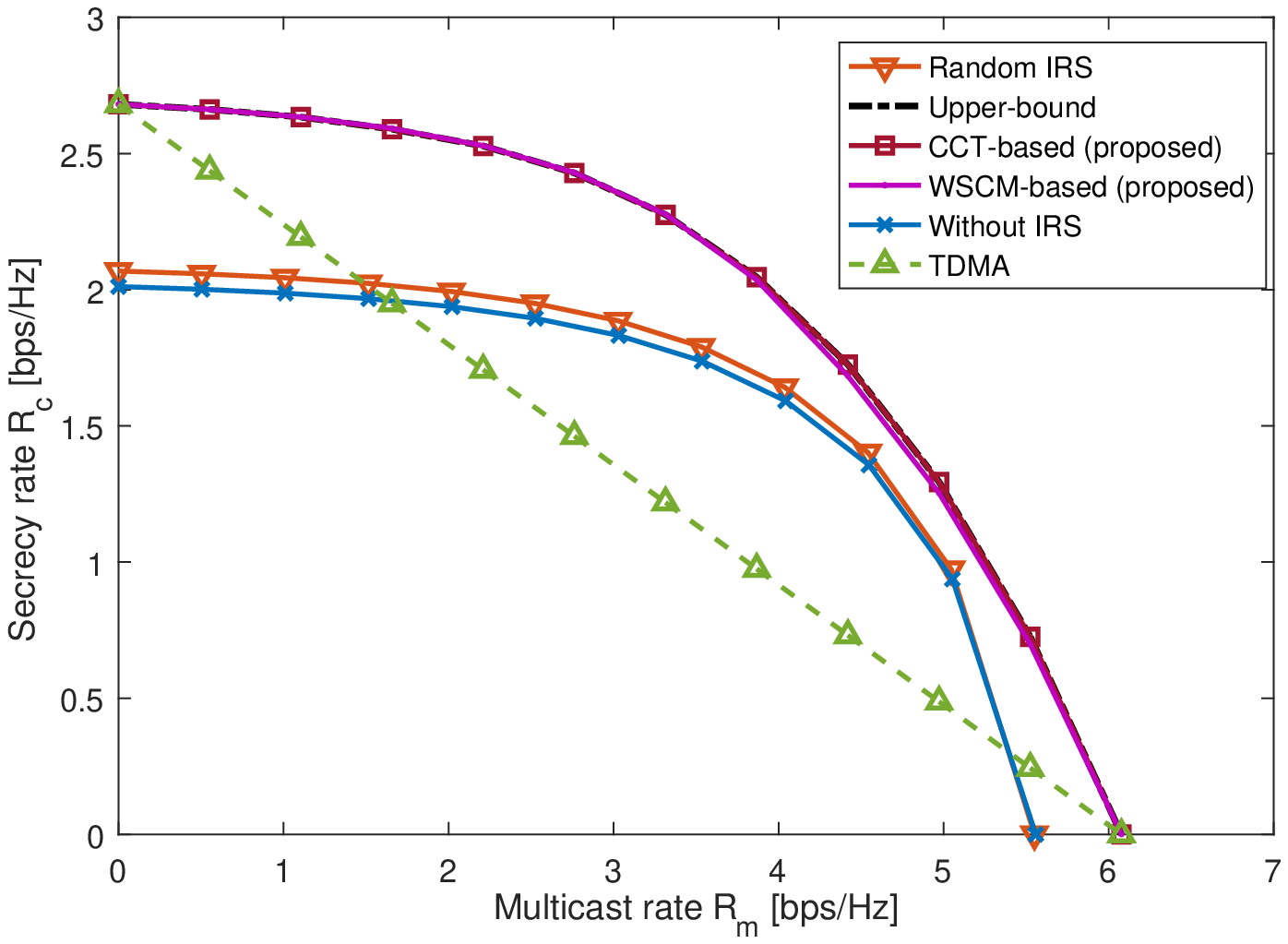} 
    \caption{Secrecy rate regions under different schemes in a two-user scenario.}\label{fd20}
  \end{minipage}%
\vspace{-10pt}
\end{figure}

%
\subsection{Two-User Case, $K=2$}
\begin{table}[t]
\centering
\caption{Simulation Parameters of Channels}
\vspace{-10pt}
\begin{tabular}{|c|c|c|c|c|c|}
\hline
Link & AP-IRS    & AP-user 1 & AP-user 2  & IRS-user 1  & IRS-user 2 \\ \hline
Distance (m)     & 30  &  $\sqrt {{{30}^2} + d_1^2}$   & $50$  & $\sqrt {{{30}^2} + {(30 - {d_1})^2}}$   & 40              \\ \hline
AoA              & $\phi _r^{{\rm{AI}}} = -\frac{\pi}{4}$, $\omega _r^{{\rm{AI}}} = \frac{\pi}{2}$ & $-$ & $-$  &   $-$ &  $-$  \\ \hline
AoD              & $-$  & $-$  &  $-$   & $\begin{array}{*{20}{c}}
{\phi _t^{{\rm{I}}{{\rm{U}}_1}} = \arctan \left( {\frac{{30}}{{30 - {d_1}}}} \right) \!-\! \frac{\pi }{4}}\\
{\omega _t^{{\rm{I}}{{\rm{U}}_1}} = \pi/{2}}
\end{array}$ &              $\begin{array}{*{20}{c}}
{\phi _t^{{\rm{I}}{{\rm{U}}_2}} = \pi /4}\\
{\omega _t^{{\rm{I}}{{\rm{U}}_2}} = \pi /2}
\end{array}$\\ \hline
LoS component& ${{\bf{a}}_r}(\phi _r^{{\rm{AI}}},\omega _r^{{\rm{AI}}})$  &  1  &  1   &  ${{\bf{a}}_t}{(\phi _t^{{\rm{I}}{{\rm{U}}_1}},\omega _t^{{\rm{I}}{{\rm{U}}_1}})^H}$& ${{\bf{a}}_t}{(\phi _t^{{\rm{I}}{{\rm{U}}_2}},\omega _t^{{\rm{I}}{{\rm{U}}_2}})^H}$ \\ \hline
\end{tabular}
\end{table}
First, we consider the two-user scenario. As shown in Fig. \ref{11},  AP, IRS, and user 2 are located at $(0,0,30)$, $(30,0,30)$, and $(30,0,-10)$ respectively. User 1 is located at $(0,0,d_1)$ and we will study the effect of $d_1$ on the performance of PHY-SI later. The center of the IRS is directly pointing to the origin and the parameters of all channels involved are summarized in Table I. We compare our proposed CCT-based and WSCM-based algorithms with the following benchmark schemes:
\begin{itemize}
\item {\bf{Random IRS}}: The PS vector ${\bf{v}}$ is randomly generated, with $|{\bf{v}}(i)|=1,\;i=1,2,...,N.$ Then, the maximum secrecy rate under the QoMS is obtained by using the optimal power allocation given in (\ref{n44}).  
\item {\bf{Upper-bound}}: Given each $r_m$, we find $\alpha _c$ via the CCT-based algorithm and obtain the upper bound $\log [C({r_m},{\alpha _c})]$ by solving $(\hat{\rm{P}}10)$. 
\item {\bf{Time division multiple address (TDMA)}}: The multicast message and confidential message are assigned into orthogonal time slots.
\item {\bf{Without IRS}}: In this scheme, we assume that there is no IRS in the system and the optimal power allocation is given in (\ref{n44}).
\end{itemize}

\begin{figure} 
  \begin{minipage}[t]{0.5\linewidth} 
    \centering 
    \includegraphics[width=3.4in]{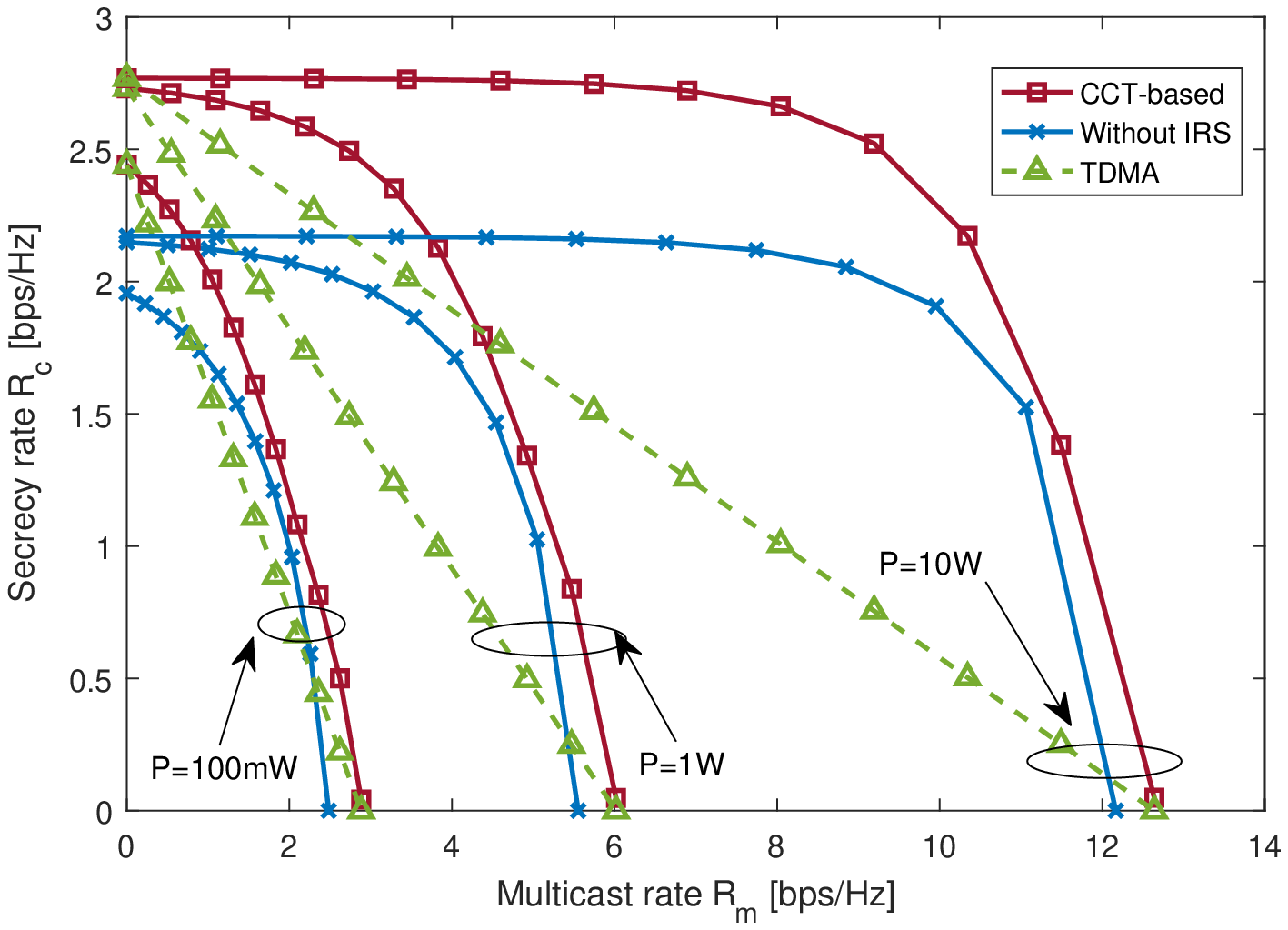} 
   \caption{Secrecy rate regions versus the transmit power.}\label{fpower}
  \end{minipage} 
    \begin{minipage}[t]{0.5\linewidth} 
    \centering 
  \includegraphics[width=3.4in]{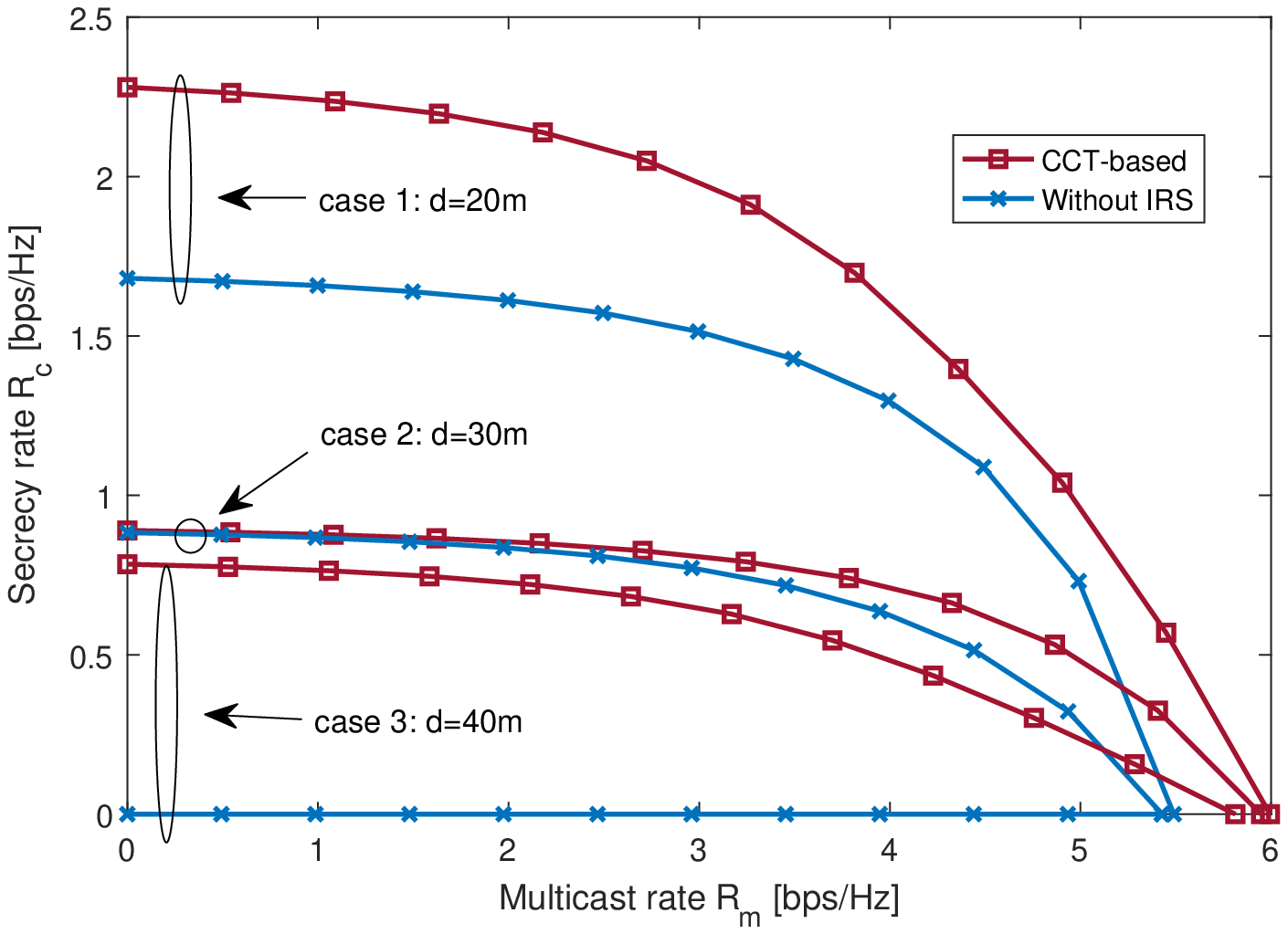}
\caption{Secrecy rate regions with different user 1's locations.}\label{d1}
  \end{minipage}%
\vspace{-10pt}
\end{figure}

%

Fig. \ref{fd20} illustrates the secrecy rate regions achieved by our considered six schemes with $d_1=20$. First of all, let us focus on the performance of our proposed CCT-based and WSCM-based algorithms. It is observed that the CCT-based algorithm can achieve comparable performance to the upper bound, which validates its efficacy. Comparing the WSCM-based algorithm to the CCT-based one, we observe that the performance gap between them is negligible. This indicates that the WSCM-based algorithm can also achieve near-optimal performance, but  with lower computational complexity. Next, we compare the proposed algorithm with other benchmarks. As seen in the figure, the Random IRS scheme brings marginal performance gain over the conventional PHY-SI without IRS, while the schemes with optimized IRS PSs (e.g., using the CCT-based and WSCM-based algorithms) are able to achieve remarkable performance gain over it, especially when the desired multicast rate is low and high. The unsatisfying performance of the random IRS scheme matches our theoretical analysis in Section V, i.e., an arbitrary IRS passive beamforming may not improve the performance of PHY-SI. Finally, for the TDMA scheme, it is observed to yield a much worse performance compared to the proposed schemes. This shows the necessity of spectrum sharing for integrating different services.

Fig. \ref{fpower} compares the secrecy rate regions achieved by different schemes under different transmit powers, namely, $P=100$mW, $1$W and $10$W. It is observed from Fig. \ref{fpower} that a larger transmit power yields a larger secrecy rate region. As the transmit power increases from $100$mW to $10$W, the maximum multicast rate increases faster than the maximum secrecy rate in all schemes. 
By comparing the IRS-assisted schemes with the scheme without IRS, it is observed that the IRS passive beamforming brings pronounced performance gain. In particular, one can observe that the maximum secrecy rate by the CCT-based algorithm with $P=100$mW is even larger than that by traditional non-IRS system with $P=10$W. This implies that using IRS can yield better performance even at a lower transmit power. Furthermore, when the desired multicast rate equals to the half of the maximum multicast rate, the performance gap between the CCT-based algorithm and TDMA attains its maximum, especially when the transmit power is high.
\begin{figure} 
  \begin{minipage}[t]{0.5\linewidth} 
    \centering 
\includegraphics[width=3in]{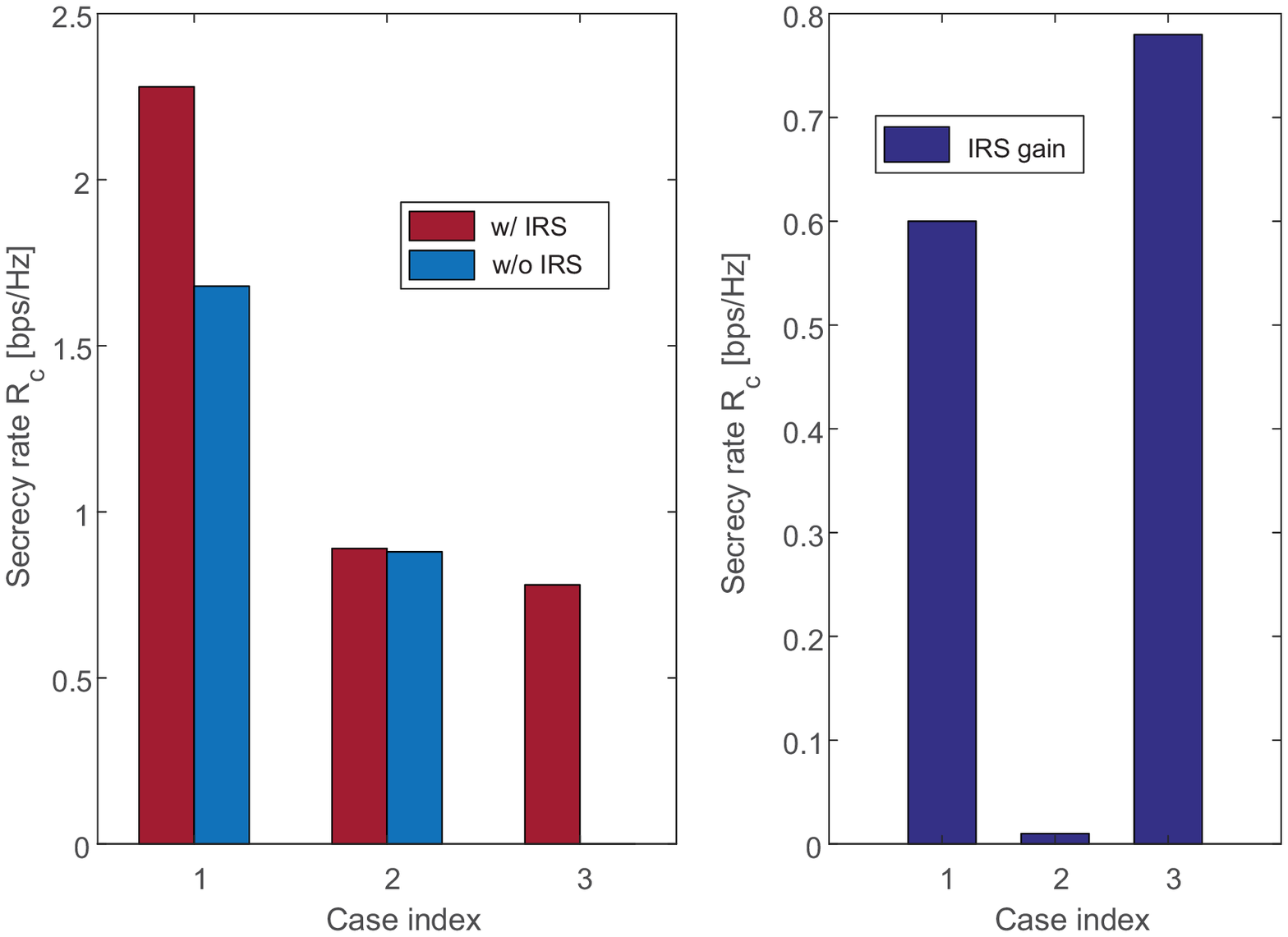}
\caption{Maximum secrecy rates and IRS gains in different cases.}\label{d2}
  \end{minipage} 
    \begin{minipage}[t]{0.5\linewidth} 
    \centering 
\includegraphics[width=3in]{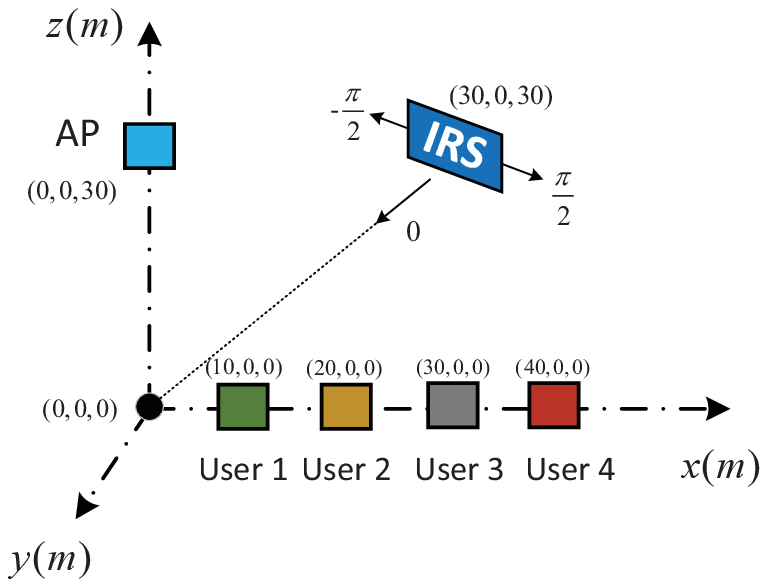}
\caption{Simulation setup of the multi-user system with $K=4$.}\label{22}
  \end{minipage}%
\vspace{-10pt}
\end{figure}


In Fig. \ref{d1},  we evaluate the secrecy rate regions by varying the location of user 1, i.e., $d_1$. Note that we in this example set $\kappa = 10^3$ to reduce the channel randomness. Cases 1, 2 and 3 correspond to the three locations of user 1 with parameters $d_1=20,$ $30$ and $40$, respectively, as depicted in Fig. \ref{11}. Fig. \ref{d1} shows that as user 1 moves away from the AP, i.e., $d_1$ increases, the secrecy rate regions achieved by the CCT-based algorithm and no-IRS system shrink while the reduction in the maximum secrecy rates are much greater than that of the maximum multicast rates. To more clearly explain this observation, Fig. \ref{d2} shows the maximum secrecy rates and the performance gap between the above two schemes. It is observed that in case 3, even though the distance between user 1 and the AP is not smaller than that between user 2 and the AP, the secrecy rate is positive in the IRS-assisted system while it is zero in the traditional system. Besides, their performance gap is rather significant in cases 1 and 3 but is negligible in case 2. This is because in case 2, users 1 and 2 are in the same direction from the IRS, which makes it hard to improve the secrecy rate performance as IRS passive beamforming can potentially improve the channel quality of both users simultaneously. This is consistent with our theoretical analysis in Section V.

\subsection{Multi-User Case}

Now, we consider the multi-user case with $K=4$. As shown in Fig. \ref{22}, we set the locations of the AP and IRS at $(0,0,30)$ and $(30,0,30)$, respectively, while the location of the $k$-th user is located at $(0,0,10k)$, $k=1,2,3,4$. 
We still compare our proposed algorithms to the above mentioned benchmarks. It is worth mentioning that in the traditional scheme, only the user with the shortest distance to AP is able to order confidential information. Due to this fact, user 1 is the legitimate user, while other users are eavesdroppers.

\begin{figure} 
  \begin{minipage}[t]{0.5\linewidth} 
    \centering 
  \includegraphics[width=3.4in]{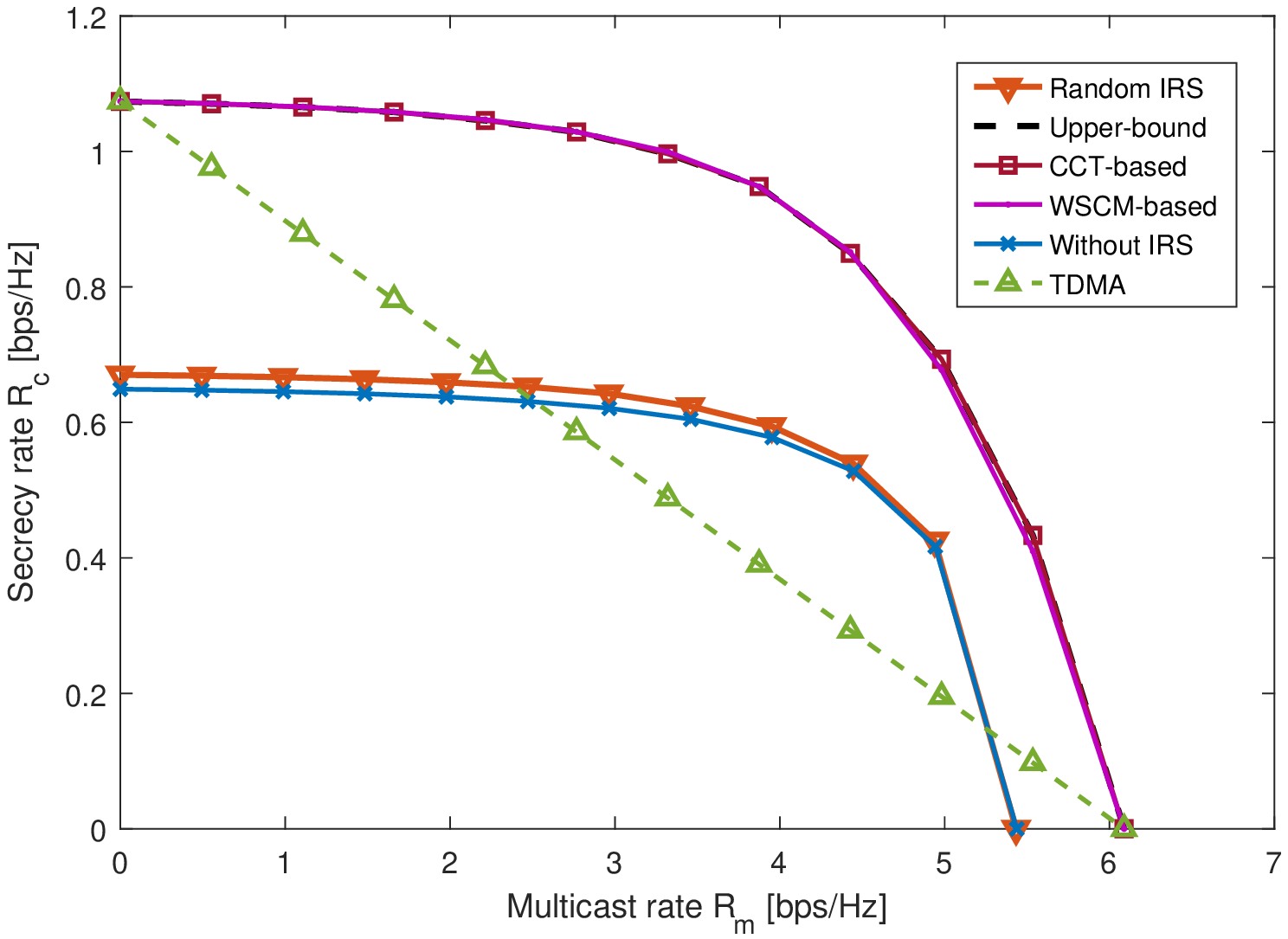}
\caption{Secrecy rate regions by different schemes in the multi-user system with $K=4$.}\label{fm1}
  \end{minipage} 
    \begin{minipage}[t]{0.5\linewidth} 
    \centering 
\includegraphics[width=3.4in]{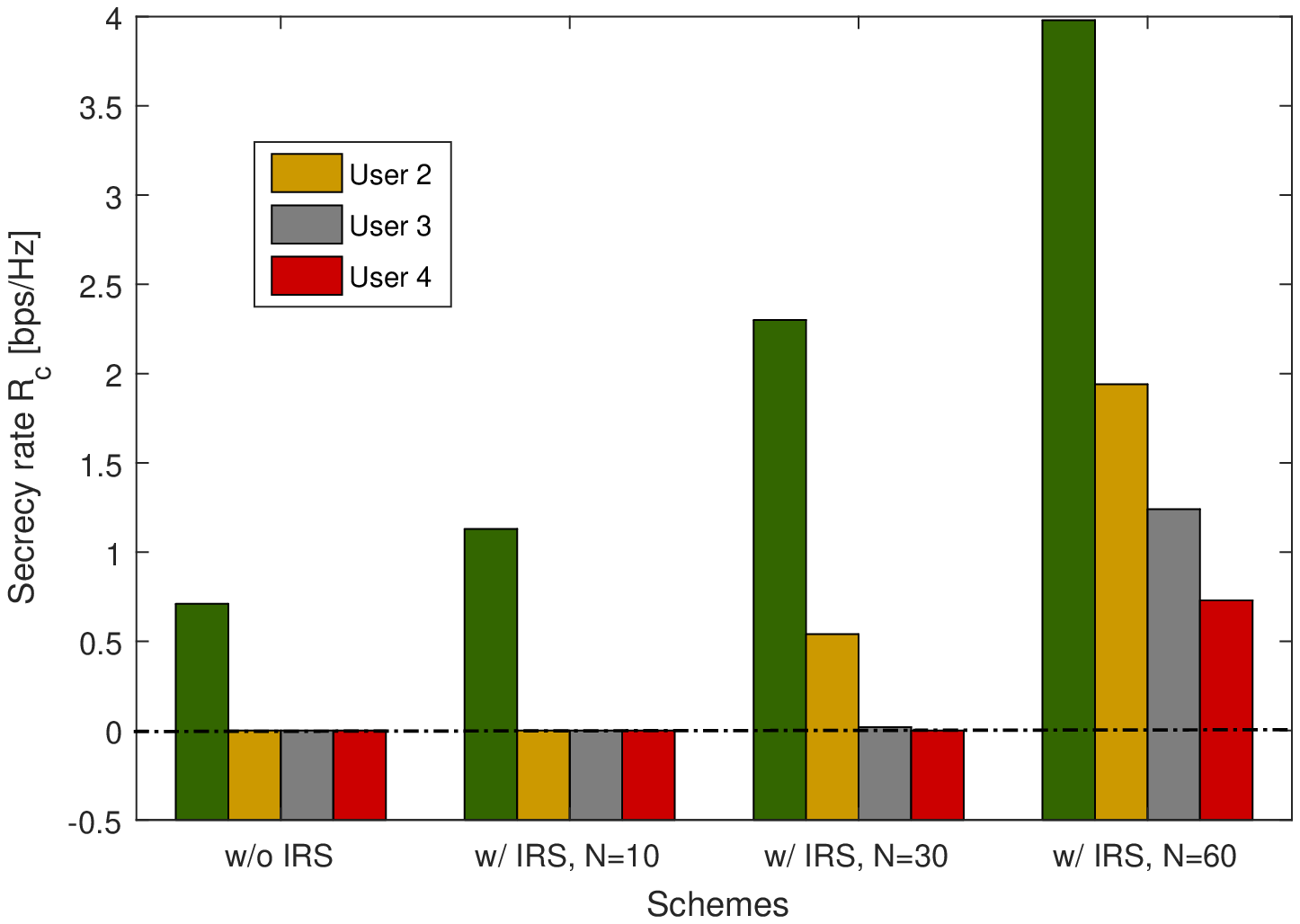}
\caption{Maximum secrecy rates at different users.}\label{d2}
  \end{minipage} 
\vspace{-10pt}
\end{figure}
Fig. \ref{fm1} shows the secrecy rate regions under different schemes in the multi-user scenario. It is observed that the CCT-based algorithm still yields a performance close to the performance upper bound. Moreover, the performance of the random IRS scheme is close to that of the traditional no-IRS system, which implies that only relying on IRS random reflection cannot bring significant performance gain. In contrast, our proposed approaches, i.e., CCT-based and WSCM-based algorithms, strikingly outperform the traditional and TDMA schemes. By comparing Fig. \ref{fm1} with Fig. \ref{fd20}, we observe that the two figures exhibit a similar trend but the secrecy rate regions in Fig. \ref{fm1} are smaller. This is due to the fact that the increase in the number of eavesdroppers is detrimental to the secrecy rate, while the QoMS needs to be satisfied for more users.  

As previously mentioned, users 2-4 cannot achieve positive secrecy rate in the traditional no-IRS system. However, in the presence of an IRS, the effective legitimate and wiretap channels in the system may be significantly changed, making it possible for other users to achieve positive secrecy rates. To verify this fact, we treat users 2-4 as the legitimate users and compare their respective maximum achievable secrecy rates in the traditional and IRS-assisted schemes in Fig. \ref{d2}, where the number of IRS reflecting elements is set to $N=10$, $30$ and $60$.  From Fig. \ref{d2}, it is observed that user 1 can achieve positive secrecy rate in both schemes considered and its secrecy rate monotonically increases with the number of IRS reflecting elements. Moreover, in the IRS-assisted scheme with  $N=30$, when user 2 is treated as a legitimate user, although, although user 1 (eavesdropper) is closest to the AP, user 2 can still achieve a positive secrecy rate thanks to IRS passive beamforming. As $N$ increases to $60$, all users are able to achieve positive secrecy rates, which demonstrates that increasing the number of IRS reflecting elements is an effective means to improve the performance of PHY-SI, allowing more users to order the confidential service.

\section{Conclusions}
We studied a joint power allocation and passive beamforming design for an IRS-aided PHY-SI system. Two customized algorithms, i.e., CCT-based algorithm and WSCM-based algorithm, were proposed to approximately characterize the secrecy rate region. Both optimality and complexity analysis of the proposed algorithms were provided, which show that a complexity-performance trade-off can be flexibly balanced by setting the number of sampling points in the uniform search. Furthermore, to gain essential insights, we performed theoretical analysis to reveal how IRS passive beamforming improves the performance of PHY-SI compared to the traditional system without IRS. Numerical results showed that both proposed algorithms achieve near-optimal performance. It is also revealed that the secrecy rate region can be significantly enlarged by the IRS, and that the number of potential users of the confidential service may be increased thanks to the IRS passive beamforming. Finally, our theoretical analysis was also validated via the numerical results. This work can be extended in several promising directions, e.g., in the more general multi-group multicast MISO/MIMO IRS-assisted systems\cite{gzhou}. Besides, it is also interesting to consider multiple IRSs and investigate their joint passive beamforming design \cite{f1} and cooperative signal reflection\cite{f2}.
 \begin{appendices}
      \section{Proof of Proposition 1}
To determine the feasibility of problem $({\rm{P}}1)$, we need to check whether constraints (\ref{4b}) and (\ref{4c}) are feasible or not. As both ${R_m}$ and ${R_c}$ are non-negative, we only need to verify whether the LHSs of (\ref{4b}) and  (\ref{4c}) are no smaller than zero. It is obvious that the LHS of (\ref{4b}) is always no less than zero for any variables $\alpha$, $\beta$, and ${\bf{v}}$. Hence, it suffices to check the LHS of (\ref{4c}). For (\ref{4c}), it is equivalent to verifying whether there exists an ${\bf{v}}$, such that the following inequality holds,
\begin{equation}\label{40}
|{\bf{m}}_1^H{\rm{diag}}({{\bf{v}}^*}){\bf{g}} + {h_1}{|^2} \ge \mathop {\max }\limits_k \frac{{\sigma _1^2}}{{\sigma _k^2}}|{\bf{m}}_k^H{\rm{diag}}({{\bf{v}}^*}){\bf{g}} + {h_k}{|^2}.
\end{equation}
Notice that ${\bf{m}}_k^H{\rm{diag}}({{\bf{v}}^*}){\bf{g}} = \sum\limits_{i = 1}^N {{e^{j{\theta _i}}}{\bf{m}}_k^*(i){\bf{g}}(i)}$. The maximum absolute value is achieved when the PSs are selected as ${\theta _i}={\phi _c} - \arg [{\bf{m}}_k^*(i){\bf{g}}(i)]$ with an arbitrary ${\phi _c}$. Let ${\bf{v}}_k$ be a PS vector with ${\bf{v}}_k(i) = {e^{ j\arg [{\bf{m}}_k^*(i){\bf{g}}(i)]}}$. If ${\big| {\sum\limits_{i = 1}^N {|{{\bf{m}}_1}(i)||{\bf{g}}(i)|}  + {h_1}} \big|^2} \ge \mathop {\max }\limits_k \frac{{\sigma _1^2}}{{\sigma _k^2}}{\left| {\sum\limits_{i = 1}^N {|{{\bf{m}}_k}(i)||{\bf{g}}(i)|}  + {h_k}} \right|^2}$ holds, there must exist a ${\bf{v}}={\bf{v}}_1$ that satisfies (\ref{40}) due to the following relationship,

\begin{align}
|{\bf{m}}_1^H{\rm{diag}}({\bf{v}}_1^*){\bf{g}} + {h_1}{|^2}&= {\left| {\sum\limits_{i = 1}^N {|{{\bf{m}}_1}(i)||{\bf{g}}(i)| + {h_1}} } \right|^2}\ge \mathop {\max }\limits_k \frac{{\sigma _1^2}}{{\sigma _k^2}}{\left| {\sum\limits_{i = 1}^N {|{{\bf{m}}_k}(i)||{\bf{g}}(i)|}  + {h_k}} \right|^2}\\
&=\mathop {  \max }\limits_k \mathop {\max }\limits_{\bf{v}} \frac{{\sigma _1^2}}{{\sigma _k^2}}|{\bf{m}}_k^H{\rm{diag}}({{\bf{v}}^*}){\bf{g}} + {h_k}{|^2}\ge \mathop {\max }\limits_k \frac{{\sigma _1^2}}{{\sigma _k^2}}|{\bf{m}}_k^H{\rm{diag}}({\bf{v}}_1^*){\bf{g}} + {h_k}{|^2}.\notag
\end{align}
If there exists some $k$ that satisfies ${\big| {\sum\limits_{i = 1}^N {|{{\bf{m}}_1}(i)||{\bf{g}}(i)|}  + {h_1}} \big|^2} \le \frac{{\sigma _1^2|{h_k}{|^2}}}{{\sigma _k^2}}$. We have $\mathop {\max }\limits_k \frac{{\sigma _1^2}}{{\sigma _k^2}}|{\bf{m}}_k^H{\rm{diag}}({{\bf{v}}^*}){\bf{g}} + {h_k}{|^2} > |{\bf{m}}_1^H{\rm{diag}}({{\bf{v}}^*}){\bf{g}} + {h_1}{|^2}$ for any ${\bf{v}}$ due to the following relationship,
\begin{equation}
\begin{split}
\mathop {\max }\limits_k \frac{{\sigma _1^2}}{{\sigma _k^2}}|{\bf{m}}_k^H{\rm{diag}}({{\bf{v}}^*}){\bf{g}} + {h_k}{|^2}&\ge \frac{{\sigma _1^2}}{{\sigma _k^2}}|{\bf{m}}_k^H{\rm{diag}}({{\bf{v}}^*}){\bf{g}} + {h_k}{|^2}> \frac{{\sigma _1^2|{h_k}{|^2}}}{{\sigma _k^2}}\\
&\ge |\sum\limits_{i = 1}^N {|{{\bf{m}}_1}(i)||{\bf{g}}(i)|}  + {h_1}{|^2} \ge |{\bf{m}}_1^H{\rm{diag}}({{\bf{v}}^*}){\bf{g}} + {h_1}{|^2},
\end{split}
\end{equation}
which contradicts (\ref{40}). Thus, the two sufficient conditions given in Proposition 1  are needed to verify the feasibility of $({\rm{P}}1)$.
\section{Proof of Proposition 2}
\begin{figure}[t]
\centering
\includegraphics[width=3in]{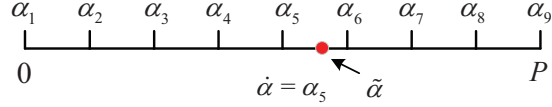}
\caption{An illustration of the uniform line search in Algorithm 1 with $T_{\alpha}=9$.}\label{line}
\vspace{-10pt}
\end{figure}
Let ${\dot \alpha }$ represent the sampling point that is smaller than but closest to $\tilde \alpha$ (see  Fig. \ref{line}). Then, we have
\begin{equation}\label{31}
{R_\triangle } \triangleq {R_c}({r_m},\tilde \alpha ) - {{\bar R}_c}({r_m}) = \log \frac{{C({r_m},\tilde \alpha )}}{{\mathop {\max }\limits_t C({r_m},{\alpha _t})}} \le \log \frac{{C({r_m},\tilde \alpha )}}{{C({r_m},\dot \alpha )}}.
\end{equation}
In particular, ${C({r_m},\tilde \alpha )}$ can be written as the objective function of $(\hat{\rm{P}}10)$ at its optimal solution, i.e., $C({r_m},\tilde \alpha ) = {\rm{Tr}}[{\bf{\tilde Y}}(\frac{{\sigma _1^2}}{{N + 1}}{\bf{I}} + \tilde \alpha {{\bf{T}}_1})]$. Let $\delta  = \tilde \alpha  - \dot \alpha \geq0$; thus, ${C({r_m},\dot \alpha )}$ can be expressed as 
\begin{subequations}\label{32}
\begin{align}
\;\;C({r_m},\dot \alpha)=&\mathop {\max }\limits_{{\bf{Y}},\xi } {\rm{Tr}}\Big[{\bf{Y}}\big(\frac{{\sigma _1^2}}{{N + 1}}{\bf{I}} + (\tilde \alpha-\delta) {{\bf{T}}_1}\big)\Big]\\
&{\rm{s}}.{\rm{t}}.\;{\rm{Tr}}\Big[{\bf{Y}}\big(\frac{{\sigma _1^2}}{{N + 1}}{\bf{I}} + \frac{{\sigma _1^2}}{{\sigma _k^2}}(\tilde \alpha-\delta) {{\bf{T}}_k}\big)\Big] \le 1,\;\forall k \in {\cal E},\label{32b}\\
&\;\;\;\;\;{\rm{Tr}}\left\{ {{\bf{Y}}\left[ {\big({\rm{P - }}(\tilde \alpha-\delta) {2^{{r_m}}}\big){{\bf{T}}_k} - \frac{{({2^{{r_m}}} - 1)\sigma _k^2}}{{N + 1}}{\bf{I}}} \right]} \right\} \ge 0,\forall k \in {\cal E},\qquad\label{32c} \\
&\;\;\;\;\;\;{\rm{Tr}}({\bf{Y}}{{\bf{E}}_i}) = \xi,\;i = 1,2,...,N + 1,\;{\bf{Y}}\succeq {\bf{0}},\;\xi  > 0.\label{32d}
\end{align}
\end{subequations}
Substituting ${\bf{\tilde Y}}$ into (\ref{32b}), we have
\begin{equation}
{\rm{Tr}}\Big[{\bf{\tilde Y}}\big(\frac{{\sigma _1^2}}{{N + 1}}{\bf{I}} + \frac{{\sigma _1^2}}{{\sigma _k^2}}(\tilde \alpha  - \delta ){{\bf{T}}_k}\big)\Big] \leq {\rm{Tr[}}{\bf{\tilde Y}}(\frac{{\sigma _1^2}}{{N + 1}}{\bf{I}} + \frac{{\sigma _1^2}}{{\sigma _k^2}}\tilde \alpha {{\bf{T}}_k})] \le 1,
\end{equation}
where the second inequality is due to the feasibility of ${\bf{\tilde Y}}$ in $(\hat{\rm{P}}10)$. Similarly, we have 
\begin{equation}
 {\rm{Tr}}\left\{ {{\bf{\tilde Y}}\left[ {({\rm{P - }}(\tilde \alpha {\rm{ - }}\delta ){2^{{r_m}}}){{\bf{T}}_k} - \frac{{({2^{{r_m}}} - 1)\sigma _k^2}}{{N + 1}}{\bf{I}}} \right]} \right\} \geq {\rm{Tr}}\left\{ {{\bf{\tilde Y}}\left[ {({\rm{P - }}\tilde \alpha {2^{{r_m}}}){{\bf{T}}_k} - \frac{{({2^{{r_m}}} - 1)\sigma _k^2}}{{N + 1}}{\bf{I}}} \right]} \right\} \ge 0.
\end{equation} 
Thus, $({\bf{\tilde Y}},\tilde \xi)$ is a feasible solution to (\ref{32}). Let $({\bf{\dot Y}},\dot \xi )$ be an optimal solution to (\ref{32}), it holds that
\begin{equation}\label{35}
{\rm{Tr}}[{\bf{\dot Y}}(\frac{{\sigma _1^2}}{{N + 1}}{\bf{I}} + (\tilde \alpha  - \delta ){{\bf{T}}_1})] \ge {\rm{Tr}}[{\bf{\tilde Y}}(\frac{{\sigma _1^2}}{{N + 1}}{\bf{I}} + (\tilde \alpha  - \delta ){{\bf{T}}_1})].
\end{equation}
Based on (\ref{31}) and (\ref{35}), we obtain
\begin{equation}\label{36}
\begin{split}
{R_\triangle } &\leq \log \frac{{C({r_m},\tilde \alpha )}}{{C({r_m},\dot \alpha )}}= \log \frac{{{\rm{Tr}}[{\bf{\tilde Y}}(\frac{{\sigma _1^2}}{{N + 1}}{\bf{I}} + \tilde \alpha {{\bf{T}}_1})]}}{{{\rm{Tr}}[{\bf{\dot Y}}(\frac{{\sigma _1^2}}{{N + 1}}{\bf{I}} + (\tilde \alpha  - \delta ){{\bf{T}}_1})]}}\\
&\le \log \frac{{{\rm{Tr}}[{\bf{\tilde Y}}(\frac{{\sigma _1^2}}{{N + 1}}{\bf{I}} + \tilde \alpha {{\bf{T}}_1})]}}{{{\rm{Tr}}[{\bf{\tilde Y}}(\frac{{\sigma _1^2}}{{N + 1}}{\bf{I}} + (\tilde \alpha  - \delta ){{\bf{T}}_1})]}}\\
& = \log \left( {1 + \frac{{\delta {\rm{Tr(}}{\bf{\tilde Y}}{{\bf{T}}_1})}}{{{\rm{Tr}}[{\bf{\tilde Y}}(\frac{{\sigma _1^2}}{{N + 1}}{\bf{I}} + (\tilde \alpha  - \delta ){{\bf{T}}_1})]}}} \right)\le \log \left( {1 + \frac{{\delta {\rm{Tr(}}{\bf{\tilde Y}}{{\bf{T}}_1})}}{{\tilde \xi \sigma _1^2}}} \right),
\end{split}
\end{equation}
where the last inequality holds due to the fact that ${\rm{Tr}}({\bf{\tilde Y}}{{\bf{E}}_i}) = \tilde \xi,\;\forall i $, and we discard the term $(\tilde \alpha  - \delta ){{\bf{T}}_1}$ as $\tilde \alpha  - \delta = \dot \alpha \geq 0$. Since both ${\bf{\tilde Y}}$ and ${{\bf{T}}_1}$ are positive semi-definite matrices, we have the Cauchy-Schwarz inequality, i.e., ${\rm{Tr(}}{\bf{\tilde Y}}{{\bf{T}}_1}) \le {\rm{Tr(}}{\bf{\tilde Y}}){\rm{Tr(}}{{\bf{T}}_1}) = \tilde \xi (N + 1){\rm{Tr(}}{{\bf{T}}_1})$. In addition, $\delta$ is the gap between $\tilde \alpha$ and $\dot \alpha$, which is no larger than the sampling interval of the linear search, i.e., $\delta  \leq \frac{P}{{{T_\alpha } - 1}}.$ As a result, the performance gap can be further upper bounded as
\begin{equation}\label{37}
{R_\triangle } \leq \log \left( {1 + \frac{{\delta {\rm{Tr(}}{\bf{\tilde Y}}{{\bf{T}}_1})}}{{\tilde \xi \sigma _1^2}}} \right) \le \log \left( {1 + \frac{{\delta (N + 1){\rm{Tr(}}{{\bf{T}}_1})}}{{\sigma _1^2}}} \right) \leq \log \left( {1 + \frac{{P(N + 1){\rm{Tr(}}{{\bf{T}}_1})}}{{\sigma _1^2({T_\alpha } - 1)}}} \right).
\end{equation}

\section{Proof of Corollary 1}
Despite that the difference between (\ref{38}) and (\ref{30}) is small, the proof is not straightforward. By leveraging the inequalities given in (\ref{31}), (\ref{36}) and (\ref{37}), we obtain that 
\begin{equation}
\log \frac{{C({r_m},\tilde \alpha )}}{{C({r_m},{\alpha _c})}}\; = \log \frac{{C({r_m},\tilde \alpha )}}{{\mathop {\max }\limits_t C({r_m},{\alpha _t})}} \le \log \frac{{C({r_m},\tilde \alpha )}}{{C({r_m},\dot \alpha )}} \le \log \left[ {1 + \frac{{P(N + 1){\rm{Tr(}}{{\bf{T}}_1})}}{{\sigma _1^2({T_\alpha } - 1)}}} \right].
\end{equation}
In general cases, an SDR solution to $({\rm{P}}10)$ may not be tight with $\alpha=\alpha_c$ and we have $\log [C({r_m},{\alpha _c})] = {R_c}({r_m},{\alpha _c}) + {\Delta _c}$. Let $\log [C({r_m},\tilde \alpha )] = {R_c}({r_m},\tilde \alpha ) + \tilde \Delta $ with $\tilde \Delta  \ge 0$. Then, it holds that 
\begin{equation}\label{41}
\begin{split}
{R_\triangle }&= {R_c}({r_m},\tilde \alpha ) - {R_c}({r_m},{\alpha _c}) = \log [C({r_m},\tilde \alpha )] - \log [C({r_m},{\alpha _c})] - \tilde \Delta  + {\Delta _c}\\
 &\le \log \left( {1 + \frac{{P(N + 1){\rm{Tr}}({{\bf{T}}_1})}}{{\sigma _1^2({T_\alpha } - 1)}}} \right) - \tilde \Delta  + {\Delta _c} \le \log \left( {1 + \frac{{P(N + 1){\rm{Tr}}({{\bf{T}}_1})}}{{\sigma _1^2({T_\alpha } - 1)}}} \right) + {\Delta _c}.
\end{split}
\end{equation}
It has been shown that the rank-one solution obtained by the GRP guarantees a worst-case performance ratio of $\frac{\pi }{4}$ to the SDR solution\cite{sdr}, i.e., $\frac{{{2^{{R_c}({r_m},{\alpha _c})}}}}{{C({r_m},{\alpha _c})}} \ge \frac{\pi }{4}.$
Due to the relation $C({r_m},{\alpha _c}) = {2^{{R_c}({r_m},{\alpha _c}) + {\Delta _c}}}$, we have
\begin{equation}\label{43}
\frac{{C({r_m},{\alpha _c})}}{{{2^{{R_c}({r_m},{\alpha _c})}}}} = {2^{{\Delta _c}}} \le \frac{4}{\pi }.
\end{equation}
By substituting (\ref{43}) into (\ref{41}), the worst-case performance gap is obtained in (\ref{39}).
\end{appendices}

\end{document}